\newcommand{\shorteq}{%
  \settowidth{\@tempdima}{-}
  \resizebox{\@tempdima}{\height}{=}%
}
\newcommand{\D}{\mathrm{d}}
\newcommand{\I}{\mathrm{i}}
\newcommand{\RE}[1]{\mathrm{Re}\{#1\}}
\newcommand{\IM}[1]{\mathrm{Im}\{#1\}}
\DeclarePairedDelimiterX{\barpair}[2]{(}{)}{%
  #1\;\delimsize\|\;#2%
}
\newtheorem{theorem}{Theorem}
\newcommand{\opt}{{p_\mathrm{hack}^\mathrm{opt}}}
\newcommand{\PG}{{p_\mathrm{hack}^\textsc{PG}}}
\newcommand{\ME}{{p_\mathrm{hack}^\textsc{ME}}}
\begin{document}

\title{Hacking Quantum Networks: Extraction and Installation of Quantum Data}

\author{Seok Hyung Lie}
\author{Yong Siah Teo}
\author{Hyunseok Jeong}\email{h.jeong37@gmail.com}
\affiliation{%
 Department of Physics and Astronomy, Seoul National University, Seoul, 151-742, Korea
}%

\date{\today}

\begin{abstract}
We study the problem of quantum hacking, which is the procedure of quantum-information extraction from and installation on a quantum network given only partial access. This problem generalizes a central topic in contemporary physics---information recovery from systems undergoing scrambling dynamics, such as the Hayden--Preskill protocol in black-hole studies. We show that a properly prepared partially entangled probe state can generally outperform a maximally entangled one in quantum hacking. Moreover, we prove that finding an optimal decoder for this stronger task is equivalent to that for Hayden--Preskill-type protocols, and supply analytical formulas for the optimal hacking fidelity of large networks. In the two-user scenario where Bob attempts to hack Alice's data, we find that the optimal fidelity increases with Bob's hacking space relative to Alice's user space. However, if a third neutral party, Charlie, is accessing the computer concurrently, the optimal hacking fidelity against Alice drops with Charlie's user-space dimension, rendering targeted quantum hacking futile in high-dimensional multi-user scenarios without classical attacks. When applied to the black-hole information problem, the limited hacking fidelity implies a reflectivity decay of a black hole as an information mirror. 
\end{abstract}

\pacs{Valid PACS appear here}
\maketitle

Advancements in quantum technologies have triggered anticipations of a new quantum-computing era, where each quantum computer can be connected either to a quantum network \cite{yurke1984quantum, elliott2002building, simon2017towards}, or to the quantum internet \cite{kimble2008quantum, wehner2018quantum, pirandola2016physics, caleffi2018quantum, castelvecchi2018quantum}. In such a delocalized quantum-computation setting, a third party can be granted access to a part of a quantum network. Suppose that Alice operates such a network with an architecture that is publicly given by a multipartite unitary operator. Alice grants Bob, a client, access to an input port of the network. In this situation, Bob might be a hacker who attempts to extract as much data from Alice as possible.

This problem can naturally emerge in non-artificial quantum interactions. It reduces to the information recovery problem, which is actively studied in the field of scrambling of quantum information \cite{sekino2008fast,hosur2016chaos,hosur2016chaos,blok2021scrambling}. One notable example is the Hayden--Preskill protocol for recovering quantum information from the Hawking radiation of old black holes \cite{page1994black, hayden2007black, bao2021hayden, cheng2020realizing, yoshida2019firewalls}. Its optimal decoding map is not completely understood and requires creative construction. In~\cite{yoshida2017efficient}, although an efficient decoding map was proposed for the information recovery task, its optimality is still an open question. On the other hand, from the perspective of the decoupling approach \cite{dupuis2010decoupling} to quantum information, perfect recovery is equivalent to implementing a quantum channel without leaking information to a particular subsystem. This is also studied in the context of catalysis of quantum randomness \cite{boes2018catalytic, wilming2020entropy, lie2019unconditionally, lie2020randomness, lie2020uniform}.

By the no-cloning theorem~\cite{wootters1982single}, however, Bob cannot simply copy out Alice's quantum data, but has to replace it with another. Although this aspect of information recovery is generally overlooked, the problem of deciding which data to install and how well this installation can be implemented is now relevant. If Bob additionally wants to extract information of Alice's next computation, he needs the side information of the quantum state currently in Alice's memory. Naturally, the ability to change the quantum data of Alice to whatever Bob prepared is the most desirable, but we will show that it is possible only in trivial cases. Therefore, when Alice's next step is unknown, Bob would typically want to replace Alice's quantum data with a part of a maximally entangled state whose reference system is in his possession for the maximal side information. We refer to this task of extraction and installation of quantum data as \textit{quantum hacking}. 

{\it Quantum hacking}.---For simplicity, we first consider a quantum network accessed by only two users, Alice and Bob, described as a $d_Ad_B$-dimensional unitary operator $U$ on systems $AB$. As a hacker, Bob might try to extract as much quantum information stored in $A$ as possible and replace it with an arbitrary quantum state of his choice. This is equivalent to feeding a `probe' quantum state that depends on the quantum data Bob wants to install into $U$ and applying a recovery map to the output state to simulate the \texttt{SWAP} operation \cite{garcia2013swap}. However, if this task can be done with error $\epsilon$ (measured by the average infidelity of pure state inputs), then the unitary operator $U$ itself should be close to the \texttt{SWAP} operator (up to local operations) with error $\epsilon$ and vice versa. It is thus impossible to substitute quantum data through a non-\texttt{SWAP} operator~\cite{SM_Qhack}.

\begin{figure}
	\centering
	\includegraphics[width=0.8\columnwidth]{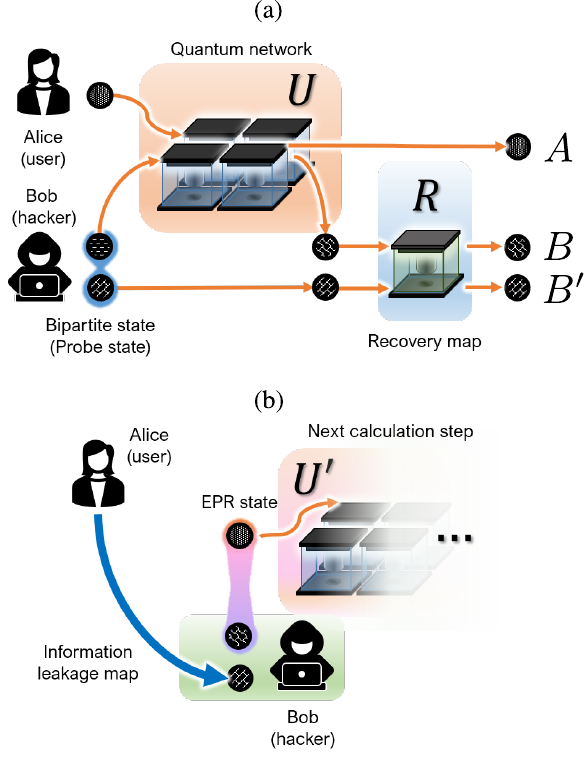}
	\caption{(a) Schematic diagram of quantum hacking of a unitary process $U$ (quantum network or quantum computer) in the two-user scenario. (b) Ideally, Bob, the hacker would successfully extract Alice's information and plant a maximally entangled state for the next quantum computation.}
	\label{fig:QHscheme}
\end{figure}

The next optimal strategy for Bob is to build as much correlation as possible with the target system $A$ while extracting quantum information out of it (See Fig.~\ref{fig:QHscheme}). This is because building correlations allows the extraction of quantum information from the next (undecided) computation step provided Bob has access to a part of the current computation output. In general, Bob possesses a reference system $B'$ ($d_B=d_{B'}$) and prepares an entangled input probe state $\ket{\phi}_{BB'}$. Bob's goals are, therefore, to extract the \emph{input} information stored in $A$, and install an \emph{output} maximally entangled state in $AB$. The latter can be interpreted as an extraction of quantum data of \textit{future} quantum computation on $A$~\cite{horodecki2005partial}, because having a maximally entangled state with the target system yields the maximum side information.

To achieve both goals, Bob applies a unitary recovery map $R$ on systems $BB'$. Since the input data in $A$ is unknown, for the purpose of evaluating Bob's strategy following the standard approach \cite{hayden2007black, yoshida2017efficient}, we assume that it is in a maximally entangled $\ket{\psi}_{AA'}$ state with an environment $A'$ ($d_A=d_{A'}$), where $\ket{\psi}_{XY}=\sum_{i=1}^{d'}\ket{ii}_{XY}/\sqrt{d'}$ 
and $d'=\min\{d_X,d_Y\}$ for any systems $XY$. For this case, successful data extraction from $A$ means entanglement swapping; Bob should have $\ket{\psi}_{A'B'}$ in the final stage. The fidelity with maximally entangled inputs is known to be a monotone function of the average fidelity of a pure-state input \cite{horodecki1999general}. Consequently, we define the quantum-hacking fidelity as
\begin{equation} 
    \label{eqn:phack}
    p_{\mathrm{hack}}:=|\bra{\psi}_{AB}\bra{\psi}_{A'B'}R_{BB'}U_{AB}\ket{\psi}_{AA'}\ket{\phi}_{BB'}|^2.
\end{equation}

Since the fidelity never decreases under a partial trace, $p_\mathrm{hack}$ serves as a lower bound for both fidelities of the extracted quantum data (systems $A'B'$) and implemented entangled state (systems $AB$). We can parametrize any bipartite entangled pure state $\ket{\phi}_{BB'}$ with an operator $\chi$ acting on system $B'$ such that $\ket{\phi}_{BB'}=\sum_i \ket{i}_B\otimes \chi\!\ket{i}_{B'}$  with $\|\chi\|_2=1$. Here $\|X\|_p:=(\Tr|X|^p)^{1/p}$, where $|X|:=\sqrt{X^\dag X}$, is the Schatten $p$-norm. With these, Eq.~(\ref{eqn:phack}) is simplified to
\begin{equation} \label{eqn:PVhack}
    p^{(R,\chi)}_{\mathrm{hack}}=|\Tr[R(I_B\otimes \chi)U^o]|^2/d_A^3.
\end{equation}
Here, the (generally non-unitary) map $U^o : AA' \to BB'$ is represented by a matrix, understood as a tensor, formed by cyclically rotating the indices of $U$ clockwise by one position---${U^o}^{ij}_{kl}:=U^{ki}_{lj}$. Here, $X^{ij}_{kl}:=\bra{ij}X\ket{kl}$ in the computational basis~\cite{SM_Qhack}. This amounts to rotating $U$ clockwise by 90~degrees in tensor-network diagrams~\cite{montangero2018introduction, landsberg2011geometry}, and is closely related to tensor reshuffling~\cite{zyczkowski2004duality, miszczak2011singular, bruzda2009random}. Note that $\|U^o\|_2=\|U\|_2=\sqrt{d_Ad_B}$~\bibnote{Although $R$ is $d_B^2\times d_B^2$, it only acts on a $d_A^2$-dimensional subspace $\Im \{(I_B \otimes \chi)U^o\}$ to the right and $(\mathrm{Ker } \{U^o\})^\perp$ to the left, so that we may treat $R$ either as a rank-$d_A^2$ partial unitary matrix or a $d_A^2 \times d_B^2$ coisometry without losing generality.}.

Each pair $R,\chi$ constitutes a hacking strategy for Bob. For a given $\chi$, which is identical to fixing Bob's probe state, the optimal unitary recovery~$R$ is the one that gives the polar decomposition $(I_B\otimes \chi)U^o=R^\dag|(I_B\otimes \chi)U^o|$. This leads to
\begin{equation}
    p^{(\chi)}_{\mathrm{hack}}=\max_R\, p^{(R,\chi)}_{\mathrm{hack}}=\|(I_B\otimes \chi)U^o\|_1^2/d_A^3\,,
    \label{eqn:phack_optR}
\end{equation}
which is equivalent to inverting a possibly non-unitary operator with a unitary one~\cite{lesovik2019arrow}. This leaves the problem of finding an optimal $\chi$ that achieves the largest hacking fidelity. A natural candidate would be $\chi=I_B/\sqrt{d_B}$, which corresponds to a maximally entangled probe state. Equation~(\ref{eqn:phack_optR}) then immediately yields the fidelity $p_{\mathrm{hack}}^\textsc{me}=\|U^o\|_1^2/(d_A^{3}d_B)$, which is a \emph{unitarity measure} of $U^o$ as $\|U^o\|_1$ is the maximal inner product of $U^o$ and an isometry. With this strategy, perfect hacking ($p_{\mathrm{hack}}^\textsc{me}=1$) is only possible when $U^o$ is proportional to an isometry---$U^{o\dag} U^o=(d_B/d_A)I_{AA'}$. On the contrary, $p_{\mathrm{hack}}^\textsc{me}$ reaches its minimum $1/d_A^2$ when $U^o$ is rank-1, which happens if $U=I_A\otimes I_B$, for instance. This value serves as a lower bound of the optimal hacking fidelity. Physical intuition may lead to the putatively obvious conclusion that a maximally entangled probe state is optimal for quantum hacking. As it turns out, this is, however, not the case in general. For example, for a qubit-qudit controlled unitary operator given as $U_c=I_A\otimes\dyad{0}_B+X_A\otimes(I_B-\dyad{0}_B)$, with the Pauli $X$ operator acting on $A$, $p_{\mathrm{hack}}^\textsc{me}$ is smaller than the $p^{(\chi)}_{\mathrm{hack}}$ with $\chi=(\dyad{0}_B+\dyad{1}_B)/\sqrt{2}$. 

To maximize $p^{(R,\chi)}_\mathrm{hack}$ in Eq.~(\ref{eqn:PVhack}), recalling that $\|\chi\|_2=1$, we may invoke the Cauchy--Schwarz inequality, $|\Tr_{B'}[\chi \Tr_B[U^oR]]|\leq \|\Tr_B[U^oR]\|_2$. This bound is saturated when $\chi=\Tr_B[R^\dag U^{o\dag}]/\|\Tr_B[U^oR]\|_2$. Hence, the true optimal hacking fidelity reads
\begin{equation} \label{eqn:opthack}
    p_{\mathrm{hack}}^{\mathrm{opt}}=\max_R\|\Tr_B[U^o R]\|_2^2/d_A^3,
\end{equation}
where the maximization is over all $d_A^2 \times d_B^2$ coisometry operator $R$~($RR^\dag=I_{AA'}$). By exploiting the polar decomposition once more, a canonical choice of $R$ is $U^oR=|U^{o\dag}|$ and yields the fidelity $p^{\mathrm{PG}}_{\mathrm{hack}}={\|\Tr_B|U^{o\dag}|\|_2^2}/{d_A^3}$. As we shall soon demonstrate that this hacking strategy is near-optimal, we will call this the ``pretty good'' (PG) strategy. As this strategy also outperforms that using a maximally entangled probe state, the following inequalities hold:
\begin{equation} \label{eqn:bds}
    p^{\textsc{ME}}_{\mathrm{hack}}\leq p^{\mathrm{PG}}_{\mathrm{hack}} \leq p^\mathrm{opt}_{\mathrm{hack}}\,.
\end{equation}
Note that the PG strategy \emph{is optimal} when $U^o$ is proportional to an isometry $(p^{\mathrm{ME}}_{\mathrm{hack}}=p_\mathrm{hack}^\mathrm{PG}=p_\mathrm{hack}^\mathrm{opt}=1)$. If $d_A=d_B$, the inequality ${1-p_\mathrm{hack}^\textsc{ME}}\leq 4({1-p_\mathrm{hack}^\mathrm{opt}})$ implies that near-perfect hacking $(p_\mathrm{hack}^\mathrm{opt} \approx 1)$ is possible only when $U^o$ is nearly unitary $(p_\mathrm{hack}^\textsc{ME} \approx 1)$~\cite{SM_Qhack}.

\begin{figure}[t]
	\centering\includegraphics[width=0.85\columnwidth]{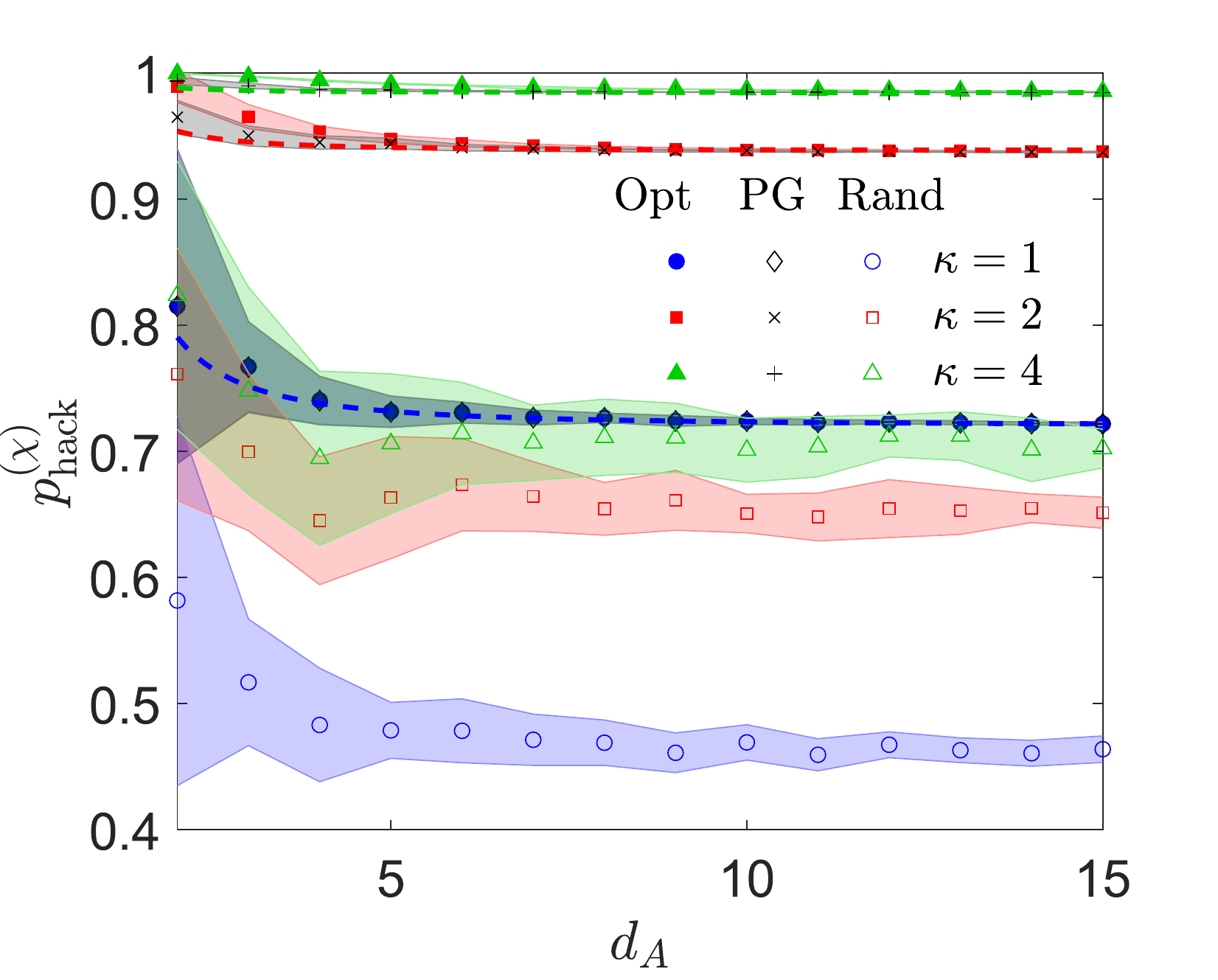}
	\caption{\label{fig:avg_phackopt}Averaged quantum-hacking performance (over 20 randomly generated Haar unitary networks $U$'s) featuring the optimal strategy~(Opt) \emph{via}~\eqref{eqn:opthack}, the PG strategy with $\widetilde{\chi}$, and a random one~(Rand) using an arbitrarily-chosen probe state. When $\kappa=1$, PG is almost the same as Opt in hacking performance. As $\kappa$ increases,  $p^\mathrm{opt}_\mathrm{hack}\rightarrow\mathcal{I}_\kappa^2\rightarrow1$. All theoretical dashed curves are computed with~\eqref{eqn:asymp}.}
\end{figure}

{\it Optimal hacking performance}.---For a quantum network described by a generic unitary $U$ and an optimal hacking strategy defined by optimizing $R$ and $\chi$, the hacking fidelity $p_\mathrm{hack}$ defined in Eq.~\eqref{eqn:phack} reaches an optimal value $p^\mathrm{opt}_\mathrm{hack}$ stated in Eq.~\eqref{eqn:opthack}. As previously discussed, the optimal probe state~($\chi_\mathrm{opt}$) for $p^\mathrm{opt}_\mathrm{hack}$ is generally not maximally entangled ($\chi_\mathrm{opt}\neq I_B/\sqrt{d_B}$). While the general solution of $\chi_\mathrm{opt}$ for the maximization problem in Eq.~\eqref{eqn:opthack} has no known analytical form, we derive an iterative gradient-ascent algorithm~\cite{teo2011mlme,teo2011adaptive} to acquire $\chi_\mathrm{opt}$ in~\cite{SM_Qhack}. 

Interestingly, one can calculate an analytical form of $p^\mathrm{opt}_\mathrm{hack}$ for sufficiently large dimensions ($d_A,d_B\gg1$). To do this, we observe that a maximally entangled probe state is asymptotically optimal for quantum hacking---$\chi_\mathrm{opt}\rightarrow I_B/\sqrt{d_B}$. This follows from the fact that the reduced state of any high-dimensional pure state approaches the maximally mixed state~\cite{page1993entropy}, which then implies that $p^\mathrm{opt}_\mathrm{hack}\rightarrow\| U^o\|_1^2/(d_A^3d_B)$. We shall consider $U$ as a random unitary operator distributed according to the Haar measure of the unitary group. Using properties of this measure and results from random matrix theory~\cite{mehta2004random,marcenko1967eigenvalues,SpFuncBk}, in the two-user scenario where only Alice and Bob are connected to the quantum network, we have the asymptotic Haar-averaged formula for $d_B\geq d_A$,
\begin{equation}
	\overline{p^{\mathrm{opt}}_\mathrm{hack}}\approx\mathcal{I}_\kappa^2+(1-\mathcal{I}_\kappa^2)/d_A^2\,,\,\,\mathcal{I}_\kappa={}_2\mathrm{F}_1\left(1/2,-1/2;2;1/\kappa^{2}\right)\,,
	\label{eqn:asymp}
\end{equation}
with $\kappa=d_B/d_A$ and ${}_2\mathrm{F}_1(\cdot)$ is the hypergeometric function~\cite{SM_Qhack}. Specifically, $\mathcal{I}_1=8/(3\pi)$. If $d_B<d_A$, we instead have $\overline{p^{\mathrm{opt}}_\mathrm{hack}}\approx\kappa^2\mathcal{I}_\kappa^2+(1-\mathcal{I}_\kappa^2)/d_A^2$, which appeals to common sense that a hacking space smaller than Alice's user space is inadequate for hacking. 

\begin{figure}[t]
	\centering\includegraphics[width=0.85\columnwidth]{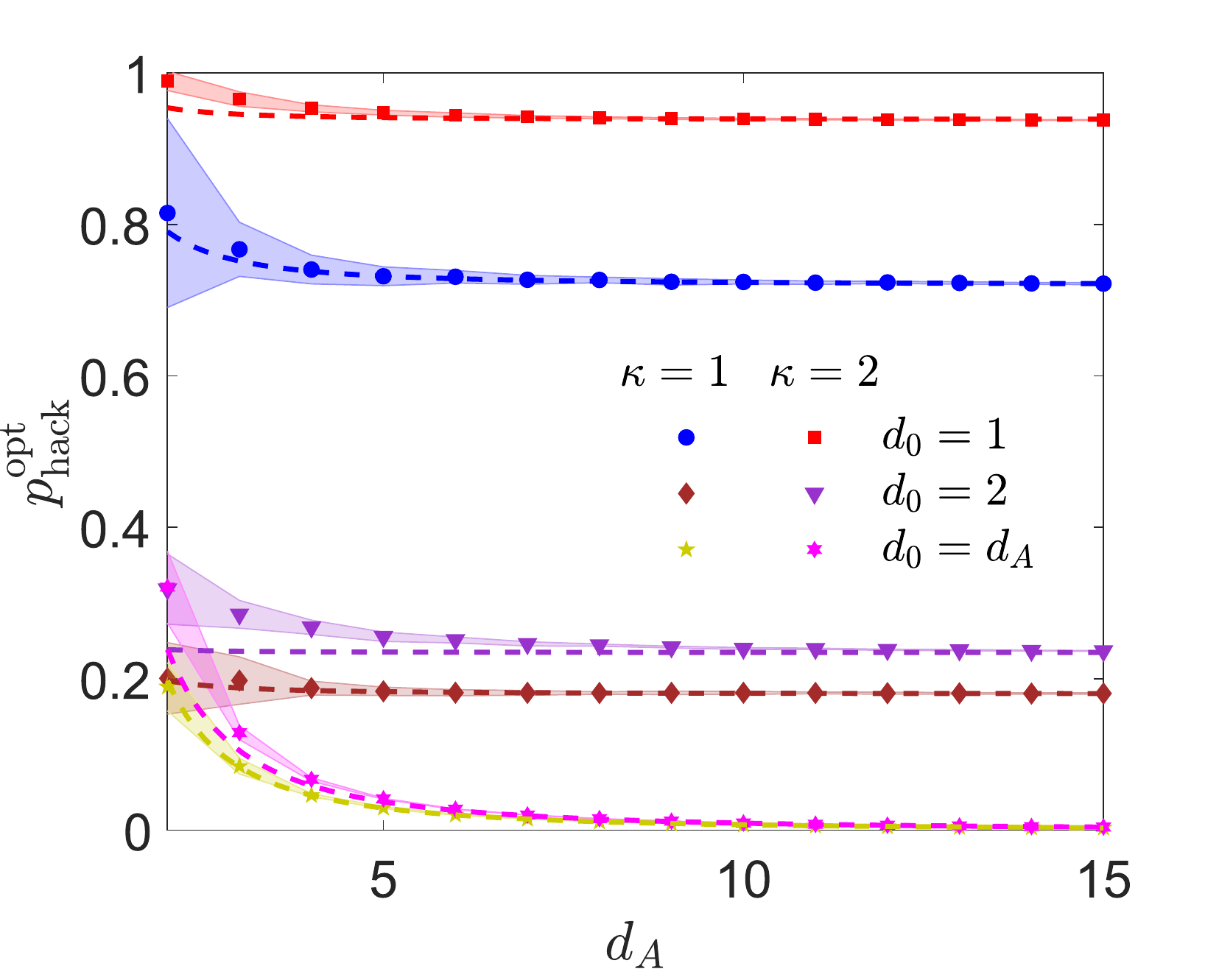}
	\caption{\label{fig:avg_phackd0} Averaged \emph{optimal} quantum-hacking performance for situations where Bob the hacker has restricted amount of network usage ($d_0\geq1$). The case $d_0=1$ corresponds to Charlie's absence. Bob's optimal hacking fidelity drops rapidly with $d_A$ as soon as there exists a third-party user with a Hilbert-space dimension identical to Alice's. All theoretical dashed curves are given by $\overline{p^{\mathrm{opt}}_\mathrm{hack}}(d_0)$.}
\end{figure}

Figure~\ref{fig:avg_phackopt} shows the performances of three hacking strategies with optimal recovery $R$ [see \eqref{eqn:phack_optR}]. The results indicate that efforts in using optimal probe states for a given network $U$ do pay off with a much higher hacking fidelity compared to all other random choices of Bob's probe state. The quantity $\mathcal{I}_\kappa^2$ is an important indicator of the limiting performance for hacking large quantum networks of a fixed dimension ratio $\kappa$. It also suggests that in the two-user scenario, a larger Hilbert space of Bob relative to Alice's results in a larger $p^{\mathrm{opt}}_\mathrm{hack}$. A single-qubit ancilla ($\kappa=2$) is enough to boost $p^{\mathrm{opt}}_\mathrm{hack}$ all the way to $\approx0.936$. We also find that the PG strategy $\widetilde{\chi}:=\chi=\Tr_B|U^{o\dag}|/\|\Tr_B|U^{o\dag}|\|_2$ is almost optimal for any $d_A$ and $d_B$. In particular, when $d_A=2=d_B$, we precisely get $\widetilde{\chi}=\chi_\mathrm{opt}=I_B/\sqrt{d_B}$~\cite{SM_Qhack}.

The hacking landscape becomes very different when there are multiple users connected to the quantum computer. Let us suppose that Charlie, a third party who represents either a single user or user group, is using the computer described by some unitary $U$, and Bob is interested in hacking only Alice's data. In this case, Charlie's data, encoded in a $(d_C=d_0)$-dimensional space, should remain untempered after passing through $U$. The resulting hacking fidelity has essentially the same form as that in \eqref{eqn:phack}, only that $U^o$ is now a rotated $\Tr_C U/d_0$. Correspondingly, $\overline{p^{\mathrm{opt}}_\mathrm{hack}(d_0)}=\overline{p^{\mathrm{opt}}_\mathrm{hack}}/d_0^2$~\cite{SM_Qhack}. Figure~\ref{fig:avg_phackd0} illustrates the hacking performance when Charlie is present. We note that for $d_0>1$, the asymptotic optimal hacking fidelity is lower than that when $d_0=1$ (no Charlie), which is obvious from the $1/d_0^2$ dependence in $\overline{p^{\mathrm{opt}}_\mathrm{hack}(d_0)}$. If Charlie uses a Hilbert space of the same dimension as Alice's to encode his data ($d_A=d_0$), we see that $p^{\mathrm{opt}}_\mathrm{hack}=O(1/d_A^2)\rightarrow0$. Such a characteristic becomes conspicuous for an arbitrarily large quantum computer ($d_0\gg1$), and reveals the exceedingly-low plausibility of targeted quantum hacking for large networks.

{\it Duality with Hayden--Preskill protocols.---}Surprisingly, the seemingly harder problem of finding an optimal quantum-hacking strategy by Bob on Alice is equivalent to  that of a Hayden--Preskill-type protocol of Alice on Bob. In this setting we assume that, instead of $U$, its (computational-basis) transpose $(U^\top)^{ij}_{kl}=U^{kl}_{ij}$ is applied to systems $AB$. Now, Alice wants to extract information from Bob's system $B$. Similar to quantum hacking, to model such information extraction, we assume that a maximally entangled state $\ket{\psi}_{BB'}$ is fed into $U^\top$. Alice also chooses a maximally entangled state $\ket{\psi}_{AA'}$ as a probe state. Systems $AB$ interact with $U^\top$ and Alice applies a $d_A^2$-dimensional unitary operator $W$ on $AA'$. Alice's goal is to prepare a maximally entangled state on systems $AB'$. The optimal fidelity between the actual and ideal states is $p_{\mathrm{HP}}^\mathrm{opt}=\max_W\bra{\psi}_{AB'}\Tr_{A'B}[\mathcal{W}\circ\mathcal{U}(\dyad{\psi}_{A'ABB'}^{\otimes 2})]\ket{\psi}_{AB'}$.
Here, $\mathcal{W}(\rho):=W_{AA'}\rho\, W_{AA'}^\dag\equiv W\rho\,W^\dag$ and $\mathcal{U}(\rho):= U^\top_{AB}\rho\,U_{AB}^*\equiv U^\top\rho\,U^*$. This expression can also be simplified in terms of the \texttt{SWAP} operator $F$ to
\begin{equation} \label{HPopt}
    p_{\mathrm{HP}}^\mathrm{opt}=\max_W\|\Tr_B[U^o\,W^\top\! F]\|_2^2/(d_A d_B^2).
\end{equation}
It follows that the optimal hacking strategy of Eq.~(\ref{eqn:opthack}) and the optimal strategy to $p_{\mathrm{HP}}$ are related by $R=W^\top\!F$. Therefore, quantum hacking is a dual problem of the Hayden--Preskill protocol, in the sense that if one problem is solvable for an arbitrary $U$, then so is the other. It follows that $p_{\mathrm{HP}}^\mathrm{opt}=\opt/\kappa^{2}$ and $p_{\mathrm{HP}}^\mathrm{opt}<1$ when $d_B>d_A$.

{\it Quantum hacking and entanglement recycling}.---The operator $U^o$ appears in various scenarios. In some, the input and output systems of the unitary operator $U$ need not match. Let $U:AB\to KL$ be a map of dimension $D=d_Ad_B=d_Kd_L$. By using a probe state ($\chi$) and recovery map $R$ on $LB'$ with matching dimensions, we get the modified hacking fidelity $p_\mathrm{hack}^{(R,\chi)}={|\Tr[R(I_L\otimes \chi)U^o]|^2}/{(d_A^2 d_K)}$. One remarkable example is the evaporation of an old black hole, where the probe state is  maximally entangled between the inner degrees of freedom of the black hole and all Hawking radiation emitted from the back hole up to that point. The black-hole degrees of freedom is typically much larger than those of matter falling into it momentarily. Let the former be $D_B=d_B=d_K$ and the latter be, say, qudit: $d_M=d_A=d_L$. If Bob collects an additional $d_M$-dimensional Hawking radiation, the resulting optimal hacking fidelity is
$p_\mathrm{hack}^\mathrm{BH}=\|U^o\|_1^2/D^2$.

We remark that the dimension of the black-hole interior state remains the same since a qudit enters the black hole and another exits it. Depending on the assumptions made on the dynamics of black holes (see \cite{tajima2021symmetry} for the recent discussion on the effect of symmetry for information recovery), there may be an estimated value $1-\epsilon_{\mathrm{BH}}$ of $p_\mathrm{hack}^\mathrm{BH}$. For example, for Haar random $U$, $p_\text{hack}^\text{BH}$ tends to $(8/3\pi)^2\approx 0.72$ for large $D$~\cite{SM_Qhack}. This also serves as a lower bound for the fidelity between the posterior probe state and a maximally entangled state. If one uses a probe state whose maximal fidelity with a maximally entangled state is $f$ for the information extraction of the next qudit falling into the black hole, where optimal hacking fidelity is typically $p_\mathrm{hack}^\mathrm{BH}$ for large $D$, then the optimal hacking fidelity approaches to the product $fp_\mathrm{hack}^\mathrm{BH}$. However, for a given hacking fidelity $p_\mathrm{hack}$ and the fidelities of the extracted quantum data ($f_\mathrm{ext}$), and between the posterior probe state and a maximally entangled state~($f_\mathrm{post}$), the following trade-off relation exists~\cite{SM_Qhack}:
\begin{equation}\label{eqn:fidtrade}
    \sqrt{1-f_\mathrm{ext}} + \sqrt{1-f_\mathrm{post}} \geq 2(1-p_\mathrm{hack})/3.
\end{equation}
So one should choose between accurate data extraction and good entanglement recycling for imperfect hacking; giving up the former means inaccurate data extraction for the current round of hacking, and giving up the latter leads to a worse fidelity in the next round.

\textit{Discussion.---}We proposed a quantum hacking task, which entails the extraction and replacement of quantum data through limited interaction with a quantum network, and analyzed its performance in terms of the hacking fidelity. In finding good hacking strategies and calculating the optimal hacking fidelity for generic multipartite unitary networks, we give explicit operational meaning to 90-degree tensor rotations of unitary operators. While Bob's hacking fidelity on Alice saturates at a nonzero value in the two-network-user scenario, we find that with multiple users, naive attempts to hack any single user would generally result in exceedingly-low hacking fidelity. To improve the hacking success, it is necessary to perform program modifications to $U$, akin to hackers introducing malware to control classical computers. For quantum networks, quantum circuits would constitute such a program, but since an arbitrary quantum program cannot be encoded into a state owing to the no-programming theorem~\cite{nielsen1997program}, Bob would need to supplement his quantum resources with additional classical attacks to improve the hacking fidelity.

As an interesting application, we considered an information-reflection model for black holes, and surveyed the sustainability of black-hole mirroring. From our trade-off relation in~(\ref{eqn:fidtrade}) and analysis of black-hole hacking, we conclude that the black hole in this model indeed functions as a mirror \cite{hayden2007black}, but its ``reflectivity'' may be gradually degraded over time (See Ref. \cite{oshita2020reflectivity, wang2020echoes} for different notions of reflectivity of quantum black holes). One could collect more Hawking radiation to increase the hacking fidelity, but that necessitates an entanglement reduction~\cite{hayden2007black}, thereby leading to yet another type of reflectivity degradation. This questions whether quantum information gets destroyed when it falls into an `older' black hole that has already reflected a significant amount of quantum information that fell into it. 

\begin{acknowledgments}
This work is supported by the National Research Foundation of Korea (NRF-2019R1A6A1A10073437, NRF-2019M3E4A1080074, NRF-2020R1A2C1008609, NRF-2020K2A9A1A06102946) via the Institute of Applied Physics at Seoul National University and by the Ministry of Science and ICT, Korea, under the ITRC (Information Technology Research Center) support program (IITP-2020-0-01606) supervised by the IITP (Institute of Information \& Communications Technology Planning \& Evaluation).
\end{acknowledgments}

\newpage
\begin{center}
	\bf SUPPLEMENTAL MATERIAL
\end{center}

\section{Implausibility of arbitrary quantum-state installation}
For any $d$-dimensional quantum channel $\Lambda$, the infidelity $1-\mathcal{F}_{\mathrm{ent}}(\Lambda)=1-\bra{\psi}(\mathcal{I}\otimes\Lambda)(\dyad{\psi}))\ket{\psi}$ for maximally entangled input state $\ket{\psi}=\sum_{i=1}^d\ket{ii}$ and the average infidelity over Haar random pure input state $1-\mathcal{F}_{\mathrm{pure}}(\Lambda)=1-\int d\phi \bra{\phi}\Lambda(\dyad{\phi})\ket{\phi}$ have the following linear dependence \cite{horodecki1999general},
\begin{equation}1-F_\mathrm{ent}(\Lambda)=\frac{d}{d+1}(1-\mathcal{F}_\mathrm{pure}(\Lambda))\,,
\end{equation}
we will use the infidelity $1-F_\mathrm{ent}(\Lambda)$ instead and have the equivalent result without losing generality.

Suppose, with the unitary operators $U$, $P$ and $R$, that the \texttt{SWAP} operator can be approximated with error $\epsilon$ according to

\begin{equation}
	\begin{quantikz}[column sep=0.3cm]
		\lstick{$A$} & \qw   &\gate[2]{U}&\qw&\qw          \qw\\
		\push{\ket{0}_B} & \gate[2]{P}   &    &\gate[2]{R}&\meterD{\Tr}\\
		\lstick{$C$} &   &\qw    & &\qw                    \qw
	\end{quantikz} \approx_\epsilon \begin{quantikz} [column sep=0.3cm]
		\lstick{A}&\gate[swap, style={draw=white}]{}&\qw\\
		\lstick{C}&&\qw
	\end{quantikz}.
\end{equation}
Here, $\approx_\epsilon$ means that the two circuits are close to each other with error $\epsilon$ in the infidelity for maximally entangled input states. It means that $\Sigma \approx_\epsilon \Lambda$ is equivalent to $F(J_\Sigma,J_\Lambda)\geq 1-\epsilon$ where $J_\mathcal{N}$ is the normalized Choi matrix for quantum channel $\mathcal{N}$. From the Uhlmann theorem \cite{uhlmann1976transition}, it follows that there exists a pure state $\ket{s}_B$ such that
\begin{equation}
	\begin{quantikz}[column sep=0.3cm]
		\lstick{$A$} & \qw   &\gate[2]{U}&\qw&\qw          \qw\\
		\push{\ket{0}_B} & \gate[2]{P}   &    &\gate[2]{R}&\qw\\
		\lstick{$C$} &   &\qw    & &\qw                    \qw
	\end{quantikz} \approx_{\epsilon} \begin{quantikz} [column sep=0.3cm]
		\lstick{$A$}&\gate[swap, style={draw=white}]{}&\qw\\
		\lstick{$C$}&&\qw\\
		\lstick{$\ket{s}_B$}&\qw&\qw
	\end{quantikz}.
\end{equation}
Since the fidelity never decreases under partial trace, it follows that
\begin{equation}
	\begin{quantikz}[column sep=0.3cm]
		\lstick{$A$}&\qw&\gate[2]{U}&\qw\\
		\lstick{$C$}&\gate{\Phi}&&\qw
	\end{quantikz} \approx_{\epsilon} { \begin{quantikz} [row sep=0.4cm, column sep=0.3cm]
			\lstick{$A$}&\gate[swap, style={draw=white}]{}&\qw&\qw\\
			\lstick{$C$}&&\gate{\Psi}&\qw
	\end{quantikz}}
\end{equation}
where
\begin{equation}
	\begin{quantikz}[column sep=0.3cm] \lstick{$C$}&\gate{\Phi}&\qw\rstick{$B$}\end{quantikz}=
	\begin{quantikz}[column sep=0.3cm]
		\lstick{$\ket{0}_B$}&\gate[2]{P}&\qw\\
		\lstick{C}&&\meterD{\Tr}
	\end{quantikz},
\end{equation}
and
\begin{equation}
	\begin{quantikz}[column sep=0.3cm] \lstick{$C$}&\gate{\Psi}&\qw\end{quantikz}=
	\begin{quantikz}[column sep=0.3cm]
		\lstick{$\ket{s}_B$}&\gate[2]{R^\dag}&\meterD{\Tr}\\
		\lstick{$C$}&&\qw
	\end{quantikz}.
\end{equation}
From the cyclicity of the fidelity for the maximally entangled input state, it follows that
\begin{equation}
	\begin{quantikz}[column sep=0.3cm]
		\lstick{$A$}&\gate[2]{U}&\qw\\
		\lstick{$B$}&&\qw
	\end{quantikz} \approx_{\epsilon} \begin{quantikz} [column sep=0.3cm]
		\lstick{A}&\gate[swap, style={draw=white}]{}&\gate{\Phi^\dag}&\qw\\
		\lstick{B}&&\gate{\Psi}&\qw
	\end{quantikz}.
\end{equation}

It can be interpreted that unless the target network $U$ itself is already close to a swapping operator followed by local operations, it is impossible to nearly perfectly substitute the quantum information out of the network. Conversely, if $U$ is close to the \texttt{SWAP} operator with error $\epsilon$ (if the dimensions of $A$ and $B$ do not match, then it can be $\texttt{SWAP} \oplus I$), then by choosing $P$ and $R$ also as \texttt{SWAP} operators, one can achieve data substitution with error $\epsilon$.

\section{Rotation of matrix and fidelitiy bounds}
We first show how the fidelity expression
\begin{equation}
	p_{\mathrm{hack}}^{(R,\chi)}:=|\bra{\psi}_{AB}\bra{\psi}_{A'B'}R_{BB'}U_{AB}\ket{\psi}_{AA'}\ket{\phi}_{BB'}|^2\;,
\end{equation}
is simplified with the rotated matrix $U^o$. First, note that $p_{\mathrm{hack}}$ is the fidelity between the pure states
\begin{equation}
	\begin{quantikz}
		&\makeebit[0]{$\dfrac{1}{\sqrt{d_A}}$}&\qw&\qw&\qw\rstick{$A'$} \\
		&&\gate[2]{U}&\qw&\qw\rstick{$A$}\\
		&\makeebit[0]{}&\qw&\gate[2]{R}&\qw\rstick{$B$}\\
		&&\gate{\chi}&&\qw\rstick{$B'$}
	\end{quantikz}\text{ and }
	\begin{quantikz}[row sep=0.75cm]
		&\makeebit[0]{}&\qw&\qw\rstick{$A'$} \\
		&\makeebit[0]{$\dfrac{1}{d_A}$}&\makeebit[0]{}&\qw\rstick{$A$}\\
		&\makeebit[0]{}&&\qw\rstick{$B$}\\
		&&\qw&\qw\rstick{$B'$}
	\end{quantikz}.
\end{equation}
Here, $\!\!\!\!\!\!\!\!\begin{quantikz}[row sep=0.3cm]&\makeebit[0]{}&\qw\\&&\qw\end{quantikz}=\sum_i \ket{ii}$ represents an unnormalized maximally entangled state with the appropriate Schmidt number for the system it is defined on. Therefore, it can be expressed with a tensor network diagram.
\begin{equation} \label{eqn:phackdiag}
	p_{\mathrm{hack}}^{(R,\chi)}=\frac{1}{d_A^{3}}\!\!\!\!\!
	\begin{quantikz}
		&\makeebit[0]{}&\qw&\qw&\qw&\qw&\qw\makeebit[0]{} \\
		&&\gate[2]{U}&\qw&\qw\makeebit[0]{}&&\makeebit[0]{}\\
		&\makeebit[0]{}&\qw&\gate[2]{R}&\qw&&\makeebit[0]{}\\
		&&\gate{\chi}&&\qw&\qw&\qw
	\end{quantikz}
	\;,
\end{equation}
with the equivalence of $\ket{\phi}_{BB'}$ and $\chi$:
\begin{equation}
	\ket{\phi}_{BB'}=\begin{quantikz}
		\makeebit[0]{}&\qw&\qw\\
		&\gate{\chi}&\qw
	\end{quantikz}\;.
\end{equation}
We remark that the time flows from left to right in the diagram, as opposed to the matrix multiplication order. The definition of $U^o$ can be expressed in a circuit diagram as follows:
\begin{equation}
	\begin{quantikz}
		&\gate[2]{U^o}&\qw\\
		&&\qw\\
	\end{quantikz}=
	\begin{quantikz}
		&\qw&\qw\makeebit[0]{}\\
		&\gate[2]{U}&\qw\\
		\makeebit[0]{}&&\qw\\
		&\qw&\qw
	\end{quantikz}
	\;.
\end{equation}
By plugging this diagram into Eq. (\ref{eqn:phackdiag}), we get
\begin{equation}
	p_{\mathrm{hack}}^{(R,\chi)}=\frac{1}{d_A^{3}}\!\!\!\!\!
	\begin{quantikz}
		&\makeebit[0]{}&\qw&\qw&\qw&\qw\makeebit[0]{}\\
		&&\gate[2]{U^o}&\qw&\gate[2]{R}&\qw\\
		&\makeebit[0]{}&&\gate{\chi}&&\qw\makeebit[0]{}\\
		&&\qw&\qw&\qw&\qw
	\end{quantikz}\;.
\end{equation}
This is equivalent to the expression
\begin{equation}
	p_\mathrm{hack}^{(R,\chi)}=\frac{|\Tr[R(I_B \otimes \chi)U^o]|^2}{d_A^3}.
\end{equation}

Since $\sqrt{d_B}\|\Tr_B|U^{o\dag}|\|_2\geq\|\Tr_B|U^{o\dag}|\|_1=\|U^{o}\|_1$, $p_\mathrm{hack}^\textsc{PG}=\|\Tr_B|U^{o\dag}|\|_2^2/d_A^3$ is higher than $p^{\mathrm{ME}}_\mathrm{hack}=\|U^o\|_1^2/(d_A^3d_B).$ Also, since the PG strategy is a particular strategy, the fidelity of it is not larger than that of the optimal strategy, so we have $\PG \leq \opt$. In summary, we have

\begin{equation}
	\ME \leq \PG \leq \opt.
\end{equation}

When $d_A=d_B=d,$ $\opt$ is the maximal fidelity between a maximally entangled state with the Schmidt rank $d^2$ and a pure state of the form $\Omega_\chi:=d(U_{AB}\otimes\chi_{B'})\dyad{\psi}^{\otimes 2}_{AA'BB'}(U_{AB}\otimes\chi_{B'})^\dag$ with $\|\chi\|_2=1$. Let $\chi_M$ be a $\chi$ that achieves the maximum.  Since the partial trace never decreases the fidelity, by tracing out systems other than $B'$, we get $F(I_{B'}/d,|\chi_M|^2)\geq \opt$. Let the recovery map that achieves the optimal fidelity be $R_\mathrm{opt}$ and let $\Theta_{V}:= V_{BB'}\dyad{\psi}_{AA'BB'}^{\otimes2} V_{BB'}^\dag$ for any bipartite unitary operator $V_{BB'}$, so that $\opt = F(\Omega_{\chi_M},\Theta_{R_\mathrm{opt}})$. Since there is a freedom of local unitary operation to the choice of $R_\mathrm{opt}$ and $\chi_M$, without loss of generality, we can assume that $\chi_M$ is positive semi-definite so that $\Tr \chi_M = \Tr |\chi_M|.$

From the following relation for arbitrary pure quantum states $\ket{\eta_1}$ and $\ket{\eta_2}$,
\begin{equation} \label{eqn:puredist}
	\frac{1}{2}\|\dyad{\eta_1}-\dyad{\eta_2}\|_1 = \sqrt{1-|\bra{\eta_1}\ket{\eta_2}|^2},
\end{equation}
we have $\sqrt{1-\ME}=\min_W \|\Theta_W-\Omega_{\textsc{ME}}\|_1/2$, where $\Omega_\textsc{ME}:=\Omega_{{I_{B'}}/{\sqrt{d}}}$. Therefore $\sqrt{1-\ME}\leq\|\Theta_{R_\mathrm{opt}}-\Omega_\textsc{ME}\|_1/2$. Because of the triangle inequality, we have  $\|\Theta_{R_\mathrm{opt}}-\Omega_\textsc{ME}\|_1 \leq \|\Theta_{R_\mathrm{opt}}-\Omega_{\chi_M}\|_1+\|\Omega_{\chi_M}-\Omega_\textsc{ME}\|_1$. By Eq. (\ref{eqn:puredist}), we have $\|\Theta_{R_\mathrm{opt}}-\Omega_{\chi_M}\|_1/2 = \sqrt{1-\opt}$ and  $\|\Omega_{\chi_M}-\Omega_\textsc{ME}\|_1/2 = \sqrt{1-F(I_{B'}/d,|\chi_M|^2)} \leq \sqrt{1-\opt}$. As a result, we have $\sqrt{1-\ME}\leq 2\sqrt{1-\opt}$ thus $1-\ME \leq 4(1-\opt).$

The trade-off relation between the data extraction fidelity $f_\mathrm{ext}$ and the posterior probe state fidelity $f_\mathrm{prob}$

\begin{equation}
	\sqrt{1-f_\mathrm{ext}} + \sqrt{1-f_\mathrm{prob}} \geq \frac{2}{3}(1-\opt)\;,
\end{equation}
directly follows from the following result.
\begin{theorem}
	For arbitrary bipartite quantum state $\rho_{AB}$ and pure states $\ket{\psi}_A$ and $\ket{\phi}_B$, let $F_A:=\bra{\psi}\rho_A\ket{\psi}$, $F_B:=\bra{\phi}\rho_B\ket{\phi}$ and $F_{AB}:=\bra{\psi}_A\bra{\phi}_B\rho_{AB}\ket{\psi}_A\ket{\phi}_B$. Then the following inequality holds
	\begin{equation}
		\sqrt{1-F_A} + \sqrt{1-F_B} \geq \frac{2}{3}(1-F_{AB}).
	\end{equation}
\end{theorem}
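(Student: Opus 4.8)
\emph{Proof proposal.} The plan is to prove the stronger and cleaner statement
\begin{equation}
    1-F_{AB}\ \le\ (1-F_A)+(1-F_B)\,,
\end{equation}
and then simply read off the claim: since $F_A,F_B\in[0,1]$ we have $\sqrt{1-F_A}\ge 1-F_A$ and likewise for $B$, so $\sqrt{1-F_A}+\sqrt{1-F_B}\ge(1-F_A)+(1-F_B)\ge 1-F_{AB}\ge\tfrac23(1-F_{AB})$, the last step using only $1-F_{AB}\ge 0$ (which holds because $\dyad{\psi}_A\otimes\dyad{\phi}_B\le I$). In particular the constant $2/3$ is far from tight; I would keep it only because that is all the application to Eq.~(\ref{eqn:fidtrade}) requires.

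For the main inequality I would work directly with density operators and avoid purifications. Introduce the sub-normalised ``conditional'' operator on $B$,
\begin{equation}
    \tau_B:=\bra{\psi}_A\rho_{AB}\ket{\psi}_A\,,\qquad \Tr\tau_B=\bra{\psi}\rho_A\ket{\psi}=F_A\,,
\end{equation}
so that $\bra{\phi}\tau_B\ket{\phi}=F_{AB}$ by definition. The structural fact driving everything is that $\rho_B-\tau_B\ge 0$ with $\Tr(\rho_B-\tau_B)=1-F_A$: completing $\ket{\psi}$ to an orthonormal basis $\{\ket{\psi},\ket{e_2},\dots\}$ of $\mathcal H_A$ and taking the partial trace in that basis gives $\rho_B-\tau_B=\sum_{k\ge 2}\bra{e_k}\rho_{AB}\ket{e_k}$, a sum of positive-semidefinite operators because $\rho_{AB}\ge 0$.

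With this, the estimate is two lines. Decompose
\begin{equation}
    F_B=\bra{\phi}\rho_B\ket{\phi}=\bra{\phi}\tau_B\ket{\phi}+\bra{\phi}(\rho_B-\tau_B)\ket{\phi}=F_{AB}+\Tr\!\big[(\rho_B-\tau_B)\dyad{\phi}\big]\,,
\end{equation}
and bound the last term using $\dyad{\phi}\le I_B$ together with $\rho_B-\tau_B\ge 0$: $\Tr[(\rho_B-\tau_B)\dyad{\phi}]\le\Tr(\rho_B-\tau_B)=1-F_A$. Hence $F_B\le F_{AB}+(1-F_A)$, which rearranges to the displayed sub-additivity.

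I do not expect a genuine obstacle. The only points needing care are (i) establishing $\rho_B\ge\tau_B$ as an operator inequality --- the naive ``the marginal is bigger'' intuition is not itself a proof; one really uses the basis expansion of $\Tr_A$ --- and (ii) the elementary bound $\Tr[M\dyad{\phi}]\le\Tr M$ for $M\ge 0$. If one prefers a state-based argument, an equivalent route is to purify $\rho_{AB}$ to $\ket{\Psi}_{ABE}$, write $\ket{\Psi}=\ket{\psi}_A\ket{u}_{BE}+\ket{w}_{ABE}$ with $\ket{w}$ orthogonal to $\ket{\psi}_A\otimes\mathcal H_{BE}$ (so $\|u\|^2=F_A$, $\|w\|^2=1-F_A$), and then split $F_B=\|(\bra{\phi}_B\otimes I_{AE})\ket{\Psi}\|^2$ into its $\ket{\psi}_A$-component, of squared norm exactly $F_{AB}$, and the orthogonal remainder, of squared norm at most $\|w\|^2=1-F_A$; Pythagoras then yields the same bound. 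Either way one closes with the convexity-free estimate $\sqrt{t}\ge t$ on $[0,1]$.
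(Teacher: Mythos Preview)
Your argument is correct and considerably more elementary than the paper's. You prove the sharper sub-additivity $1-F_{AB}\le(1-F_A)+(1-F_B)$ directly from the operator inequality $\rho_B\ge\tau_B:=\bra{\psi}_A\rho_{AB}\ket{\psi}_A$ and the trivial bound $\dyad{\phi}\le I_B$; the claimed inequality then follows with room to spare. The paper instead invokes Uhlmann's theorem to produce an auxiliary state $\phi'_B$ with $F_A=F(\psi_A\otimes\phi'_B,\rho_{AB})$, converts fidelities to trace distances via Fuchs--van de Graaf, chains triangle inequalities to bound $\tfrac12\|\psi_A\otimes\phi_B-\rho_{AB}\|_1$ by $\tfrac32(\sqrt{1-F_A}+\sqrt{1-F_B})$, and finally converts back. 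Your route avoids all of this machinery and yields a strictly stronger statement (constant $1$ rather than $2/3$, and in fact the linear bound $(1-F_A)+(1-F_B)\ge 1-F_{AB}$ without square roots). The paper's approach, on the other hand, passes through a trace-distance bound on $\rho_{AB}$ itself, which could be independently useful in contexts where one cares about that distance rather than only the fidelity $F_{AB}$.
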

\begin{proof}
	Let $\psi_A:=\dyad{\psi}$ and for other pure states with similar notations. By applying the Uhlmann theorem \cite{uhlmann1976transition}, it follows the existence of a (possibly mixed) state $\phi_B'$ such that $F_A=F(\psi_A \otimes \phi_B', \rho_{AB})$. From the Fuchs-van de Graaf inequality \cite{fuchs1999cryptographic}, we have $\|\psi_A\otimes\phi'_B-\rho_{AB}\|_1/2\leq \sqrt{1-F_A}$. From the monotonicity of 1-norm under partial trace, we have $\|\phi'-\rho_B\|_1/2\leq\sqrt{1-F_A}$. Therefore $\|\phi_B-\phi_B'\|_1/2\leq\|\phi_B-\rho_B\|_1/2+\|\rho_B-\phi_B'\|_1/2\leq\sqrt{1-F_B}+\sqrt{1-F_A}.$ From this we can bound the distance $\|\psi_A\otimes\phi_B-\rho_{AB}\|_1/2\leq \|\psi_A\otimes(\phi_B-\phi_B')\|_1/2+\|\psi_A\otimes\phi_B'-\rho_{AB}\|_1/2.$ Here, $\|\psi_A\otimes(\phi_B-\phi_B')\|_1/2=\|\phi_B-\phi_B'\|_1/2\leq\sqrt{1-F_A}+\sqrt{1-F_B}$ and $\|\psi_A\otimes\phi'_B-\rho_{AB}\|_1/2\leq \sqrt{1-F_A}$, so we have
	\begin{equation}
		\|\psi_A\otimes\phi_B-\rho_{AB}\|_1/2\leq 2\sqrt{1-F_A}+\sqrt{1-F_B}.
	\end{equation}
	Using the same argument we can also derive
	\begin{equation}
		\|\psi_A\otimes\phi_B-\rho_{AB}\|_1/2\leq \sqrt{1-F_A}+2\sqrt{1-F_B},
	\end{equation}
	hence, by averaging two inequalities, and again using the Fuchs--van de Graaf inequality $1-F_{AB}\leq\|\psi_A\otimes\psi_B-\rho_{AB}\|_1/2$, since $\psi_A\otimes\phi_B$ is a pure state, we get the desired result.
\end{proof}

Now, consider the black hole radiation problem of \textit{Hacking as entanglement recycling} section. When the probe state is a general mixed bipartite state $\Pi_{BB'}=\sum_i p_i \dyad{\phi_i}_{BB'}$ with $\ket{\phi_i}_{BB'}=\sum_k(I_B\otimes \chi_i)\ket{kk}_{BB'}$, the hacking fidelity is given as
\begin{equation}
	p_\mathrm{hack}^{(R,\Pi)}=\sum_i p_i \frac{|\Tr[R (I_B \otimes \chi_i) U^o]|^2}{d_M^2D_B}.
\end{equation}
If the dimension of the Hilberst space of black hole state is large enough, then the PG strategy becomes nearly optimal thus, $p_\mathrm{hack}^{(R,\Pi)}$ reduces to $\sum_i p_i |\Tr[\chi_i \Tr_B |U^{o\dag}|]|^2/d_M^3.$ Moreover, as $D\to \infty$. $\Tr_B|U^{o\dag}|$ converges to $\|U^o\|_1 I_B'/D_B$(See Sec. \ref{app:asymp}), so we have
\begin{align}
	\max_R p_\mathrm{hack}^{(R,\Pi)}&\approx\sum_i p_i \frac{|\Tr\chi_i|^2\|U^o\|_1^2}{D_B D^2}\nonumber\\
	&=\ME\sum_i p_i \frac{|\Tr\chi_i|^2}{D_B}\nonumber\\
	&\approx \opt f_\mathrm{prob}'.
\end{align}
Where $f_\mathrm{prob}'=\sum_i p_i {|\Tr\chi_i|^2}/{D_B}$ is the fidelity between $\Pi_{BB'}$ and a maximally entangled state. Therefore the hacking fidelity is asymptotically the product of the optimal hacking fidelity and $f_\mathrm{prob}'$.

\section{Numerical maximization of $p^\mathrm{opt}_\mathrm{hack}$}
\label{app:numer}

Given a quantum computer or network described by $U$, it is possible to derive an iterative numerical scheme to obtain the optimal probe state ($\chi_\mathrm{opt}$) that achieves the optimal hacking fidelity $p^\mathrm{opt}_\mathrm{hack}$. Rather than directly solving the numerical problem in~(5) of the main text, we can instead start with $f_\chi=\|(I_B\otimes\chi)U^o\|_1$, which is the objective function involving the square root of the rightmost side in~(4), and perform a variation with respect to $\chi$. Furthermore, the constraint $\|\chi\|_2=1$ invites the following parametrization $\chi=Z/\|Z\|_2$, such that 
\begin{equation}
	\updelta\chi=\dfrac{\updelta Z}{\|Z\|_2}-\dfrac{Z}{2\|Z\|^3_2}\Tr_{B'}[\updelta ZZ^\dag+Z\updelta Z^\dag]\,.
\end{equation}
Upon denoting $M=(I_B\otimes \chi)U^o$, we consequently have
\begin{align}
	\updelta f_\chi=&\,\dfrac{1}{2}\Tr_{B'}\left[\Tr_{B}[|M^\dag|^{-1}MU^{o\dag}]\dfrac{\updelta Z^\dag}{\|Z\|_2}\right]+\mathrm{c.c.}\nonumber\\
	&\,-\dfrac{1}{2}\Tr|M^\dag|\dfrac{\Tr_{B'}[\updelta ZZ^\dag+Z\updelta Z^\dag]}{\|Z\|^2_2}\,,
\end{align}
which leads to the operator gradient
\begin{equation}
	\dfrac{\updelta f_\chi}{\updelta Z^\dag}=\dfrac{1}{2\|Z\|_2}\left(\Tr_{B}[|M^\dag|^{-1}MU^{o\dag}]-\Tr|M^\dag|\dfrac{Z}{\|Z\|_2}\right)
\end{equation}
with respect to $Z^\dag$. Setting it to zero would then gives the extremal equation
\begin{equation}
	\chi=\dfrac{\Tr_{B}[|M^\dag|^{-1}MU^{o\dag}]}{\|\Tr_{B}[|M^\dag|^{-1}MU^{o\dag}]\|_2}\,,
	\label{eqn:ext_eqn}
\end{equation}
which may alternatively be gotten from reasoning with the Cauchy-Schwarz inequality. As $\|(I_B\otimes\chi)U^o\|_1$ is concave in $\rho_{B'}=\chi^\dag\chi$, one can generally expect a convex solution set of $\rho_{B'}$'s that solve \eqref{eqn:ext_eqn}, all of which give the unique maximal fidelity $p^\mathrm{opt}_\mathrm{hack}$.

In other words, $p^\mathrm{opt}_\mathrm{hack}$ is achieved when a solution $\chi=\chi_\mathrm{opt}$ for Eq.~\eqref{eqn:ext_eqn} is obtained. While there are no known closed-form expressions for this solution, we can nevertheless find explicit analytical forms for certain special cases. The most immediate one happens to be the limiting case $d_B\rightarrow\infty$, whence we have $\chi_\mathrm{opt}\rightarrow I_B/\sqrt{d_B}$, since in this limit, $\Tr_BA\rightarrow\Tr A/d_B$ for any bipartite operator $A$ of systems $BB'$. For finite $d_B$, we may still have an estimate for $\chi_\mathrm{opt}\approx\widetilde{\chi}$. A straightforward way to do this is to simply iterate the extremal equation \eqref{eqn:ext_eqn} once by substituting $I_B/\sqrt{d_B}$ for $\chi$ on the right-hand side. This gives us $\widetilde{\chi}=\Tr_B|U^{o\dag}|/\|\Tr_{B}|U^{o\dag}|\|_2$, which is in practice very close to $\chi_\mathrm{opt}$.

To numerically compute the actual $\chi_\mathrm{opt}$, we adopt the steepest-ascent methodology and require that $\updelta f_\chi=\Tr\left[(\updelta f_\chi/\updelta Z^\dag)\updelta Z^\dag+\updelta Z(\updelta f_\chi/\updelta Z)\right]\geq0$. This amounts to defining the increment $\updelta Z:=\epsilon\,\updelta f_\chi/\updelta Z^\dag$ for some small real $\epsilon>0$ that functions as a fixed iteration step size. This allows us to state the iterative equations
\begin{align}
	Z_{k+1}=&\,\left(1-\dfrac{\epsilon}{2}\dfrac{\Tr|M_k^\dag|}{\|Z_k\|_2}\right)Z_k+\dfrac{\epsilon}{2}\Tr_{B}[|M_k^\dag|^{-1}M_kU^{o\dag}]\,,\nonumber\\
	\chi_{k+1}=&\,\dfrac{Z_{k+1}}{\|Z_{k+1}\|_2}
\end{align}
that can be used to converge $\chi_k$ to $\chi_\mathrm{opt}$ starting with $Z_1=I_B$, where a factor of $\|Z_k\|_2$ has been neglected for a suitably chosen magnitude of $\epsilon$. As $\updelta f_\chi=2\epsilon\Tr[|\updelta f_\chi/\updelta Z|^2]>0$ by construction, convergence is guaranteed as long as $\epsilon$ is sufficiently small. Operationally, one can afford to choose a reasonably large $\epsilon$ to increase the convergence rate.

\section{Optimal quantum hacking of two-qubit quantum networks}

The case where $d_A=2=d_B$ presents the unique situation in which one can confirm, indeed, that $\chi_\mathrm{opt}=I_B/\sqrt{d_B}$. To this end, we proceed to construct the exact expression of $|U^{o\dag}|$. Since $UU^\dag=I$, in terms of the product computational basis $\bra{jk}U\ket{lm}=U^{jk}_{lm}$, the basic relation
\begin{equation}
	\sum^1_{l,m=0}U^{j_1k_1}_{lm}U^{j_2k_2*}_{lm}=\delta_{j_1,j_2}\delta_{k_1,k_2}
	\label{eqn:unitarity}
\end{equation}
shall be immensely useful in the subsequent discussion.

Using Eq.~\eqref{eqn:unitarity}, we first note that the product
\begin{align}
	U^oU^{o\dag}=&\sum_{l,m,m'}\!\Big[\ket{0,m}(U^{00}_{lm}U^{00*}_{lm'}+U^{10}_{lm}U^{10*}_{lm'})\bra{0,m'}\nonumber\\[-1ex]
	&\,\qquad+\ket{1,m}(U^{01}_{lm}U^{01*}_{lm'}+U^{11}_{lm}U^{11*}_{lm'})\bra{1,m'}\nonumber\\
	&\,\qquad+\ket{0,m}(U^{00}_{lm}U^{01*}_{lm'}+U^{10}_{lm}U^{11*}_{lm'})\bra{1,m'}\nonumber\\
	&\,\qquad+\ket{1,m}(U^{01}_{lm}U^{00*}_{lm'}+U^{11}_{lm}U^{10*}_{lm'})\bra{0,m'}\Big]\nonumber\\
	\,\widehat{=}&\,\begin{pmatrix}
		\mathbf{A} & \mathbf{B}^\dag\\
		\mathbf{B} & \mathbf{A}^{-1}\mathrm{Det}\mathbf{A}
	\end{pmatrix}
\end{align}
may be characterized, in the product computational basis, by only two $2\times2$ matrices in a highly specific manner, where $\mathbf{B}$ is traceless. Such a structure is absent in higher dimensions. For convenience, we may rewrite
\begin{equation}
	U^oU^{o\dag}\,\widehat{=}\,\bm{1}+\begin{pmatrix}
		\bm{a\cdot\sigma} & \bm{b}^*\bm{\cdot\sigma}\\
		\bm{b\cdot\sigma} & -\bm{a\cdot\sigma}
	\end{pmatrix}
\end{equation}
in terms of dot products ($\bm{v\cdot w}=\bm{v}^\top\bm{w}$) of the vectorial parameters $\bm{a}$ and $\bm{b}$ with the standard vector of Pauli operators $\bm{\sigma}=(\sigma_x,\sigma_y,\sigma_z)^\top$ to separate the matrix representation of $U^oU^{o\dag}$ into the identity and another $4\times4$ traceless matrix, where $\bm{a}$ is real and $\bm{b}$ complex.

With the identity $(\bm{a\cdot\sigma})(\bm{a}'\bm{\cdot\sigma})=\bm{a\cdot a}'\bm{1}+\I\,\bm{a\times a}'\bm{\cdot\sigma}$, it is a straightforward matter to verify that $|U^{o\dag}|$ has the same matrix-representation structure, where all its parameters satisfy the following conditions:
\begin{align}
	|U^{o\dag}|\,\widehat{=}&\,c'\bm{1}+\begin{pmatrix}
		\bm{a}'\bm{\cdot\sigma} & \bm{b}'^*\bm{\cdot\sigma}\\
		\bm{b}'\bm{\cdot\sigma} & -\bm{a}'\bm{\cdot\sigma}
	\end{pmatrix}\,,\nonumber\\
	1=&\,c'^2+|\bm{a}'|^2+|\RE{\bm{b}'}|^2+|\IM{\bm{b}'}|^2\,,\nonumber\\
	\bm{a}=&\,2\,c'\bm{a}'-2\,\RE{\bm{b}'}\bm{\times}\IM{\bm{b}'}\,,\nonumber\\
	\RE{\bm{b}}=&\,2\,c'\RE{\bm{b}'}-2\,\IM{\bm{b}'}\bm{\times}\bm{a}'\,,\nonumber\\
	\IM{\bm{b}}=&\,2\,c'\IM{\bm{b}'}-2\,\bm{a}'\bm{\times}\RE{\bm{b}'}\,.
\end{align}
Here, $\RE{\cdot}$ and $\IM{\cdot}$ respectively denote the real and imaginary parts of the argument. Evidently, in this fortuitously easy yet general two-qubit scenario, we find that $\Tr_B|U^{o\dag}|=2\,c'I_B$, such that $\widetilde{\chi}=\Tr_B|U^{o\dag}|/\|\Tr_B|U^{o\dag}|\|_2=I_B/\sqrt{d_B}=\chi_\mathrm{opt}$.

\section{Asymptotic formulas for $p^\mathrm{opt}_\mathrm{hack}$} \label{app:asymp}

We first suppose the symmetric problem involving a $d_Ad_B$-dimensional unitary operator $U$. In the asymptotic limit $d_A,d_B\rightarrow\infty$ such that $\kappa=d_B/d_A$, according to the discussions in Sec.~\ref{app:numer}, the corresponding optimal quantum-hacking fidelity takes the form $p^\mathrm{opt}_\mathrm{hack}\rightarrow\|U^o\|^2_1/(d_A^3d_B)$. The analytical form of its average value then necessitates calculating the average term $\overline{\|U^o\|^2_1}$ over all random $U$'s distributed according to the Haar measure. We emphasize that since $U^o$ is represented by a $d_B^2\times d_A^2$ matrix that is obtained from just a sequence of index swapping operations, such a rectangular matrix still retains the statistical properties of a Haar unitary matrix elements, namely $\overline{U^{ojk}_{\hphantom{o}lm}}=0$ and $\overline{|U^{ojk}_{\hphantom{o}lm}|^2}=1/(d_Ad_B)$~\cite{mehta2004random}. If we additionally suppose that $\kappa>1$, then in the asymptotic limit, the \emph{eigenvalues} $(\sigma_j)$ of $\kappa^{-1} U^{o\dag}U^o$ are independently and identically distributed according to the Mar{\v c}enko--Pastur distribution~\cite{marcenko1967eigenvalues}:
\begin{equation}
	\sigma_j\sim\dfrac{1}{2\pi}\dfrac{\sqrt{(\lambda_+-x)(x-\lambda_-)}}{\lambda x}\,,\quad \lambda_{\pm}=(1\pm\sqrt{\lambda})^2\,,
	\label{eqn:MPdist}
\end{equation}
where $\lambda=\kappa^{-2}$. With these,
\begin{align}
	\overline{\|U^o\|^2_1}=&\,\kappa\left(\sum^{d_A^2}_{j=1}\overline{\sigma_j}+\sum_{j\neq k}\overline{\sqrt{\sigma_j}}\,\overline{\sqrt{\sigma_k}}\right)\nonumber\\
	=&\,d_Ad_B+(d_A^3-d_A)d_B\mathcal{I}^2_\kappa\,,
\end{align}
where $\overline{\,\,\cdot\,\,\vphantom{M}}$ now translates to an average with respect to the distribution in~\eqref{eqn:MPdist}. The quantity $\mathcal{I}_\kappa=\overline{x^{1/2}}$ refers to the half-moment of this distribution. For completeness, we evaluate the $m$th moment: 
\begin{align}
	&\,\overline{x^m}=\int^{\lambda_+}_{\lambda_-}\,\dfrac{\D x}{2\pi\lambda}\,x^{m-1}\sqrt{(\lambda_+-x)(x-\lambda_-)}\nonumber\\
	=&\,\dfrac{2}{\pi}(1+\lambda)^{m-1}\int^{1}_{-1}\,\D t\left(1+\dfrac{2\sqrt{\lambda}}{1+\lambda}\,t\right)^{m-1}\sqrt{1-t^2}\nonumber\\
	=&\,\left(1+\dfrac{1}{\kappa^2}\right)^{m-1}{}_2\mathrm{F}_1\left(\dfrac{1-m}{2},1-\frac{m}{2};2;\left(\dfrac{2/\kappa}{1+1/\kappa^2}\right)^2\right)\,.
	\label{eqn:mth}
\end{align}
The variable substitution $x=(\lambda_++\lambda_-)/2+(\lambda_+-\lambda_-)t/2$ has been introduced after the second equality in \eqref{eqn:mth}. We emphasize that the last equality in \eqref{eqn:mth} is valid for \emph{any real} $m$ so long as the previous $t$~integral converges. Upon using the identity~\cite{SpFuncBk}
\begin{equation}
	{}_2\mathrm{F}_1(2a,2a+1-\gamma;\gamma;z)=\dfrac{{}_2\mathrm{F}_1\left(a,a+\dfrac{1}{2};\gamma;\dfrac{4z}{(1+z)^2}\right)}{(1+z)^{2a}}\,,
\end{equation}
we get $\overline{x^m}={}_2\mathrm{F}_1(1-m,-m;2;1/\kappa^2)$. Thereafter, the substitution $m=1/2$ nabs us the final answer $\mathcal{I}_\kappa={}_2\mathrm{F}_1\left(1/2,-1/2;2;1/\kappa^2\right)$, so that $\overline{p^{\mathrm{opt}}_\mathrm{hack}}\approx\mathcal{I}_\kappa^2+(1-\mathcal{I}_\kappa^2)/d_A^2$. Moreover, we may simplify this expression further by considering a moderately large $\kappa$, for which the hypergeometric function has the simple second-order approximation  $\mathcal{I}_\kappa\approx1-1/(8\kappa^2)$. This simplification works amazingly well even for $\kappa=1$---$0.875\approx8/(3\pi)$---such that one might as well use this approximation for any $\kappa$.

Now, if $d_B<d_A$, one can go through a similar line of argument and arrive at $\overline{\|U^o\|^2_1}=d_Ad_B+d_A(d_B^3-d_B)\mathcal{I}^2_\kappa$, in which case, we get $\overline{p^{\mathrm{opt}}_\mathrm{hack}}\approx\kappa^2\mathcal{I}_\kappa^2+(1-\mathcal{I}_\kappa^2)/d_A^2$, which tells us that the asymptotic optimal quantum-hacking fidelity is going to be smaller than that when $d_B>d_A$---by a factor of $\kappa^2$ to be more precise. This makes physical sense since Bob, the quantum hacker, now uses a smaller Hilbert space than Alice.

We are now in the position to discuss the asymmetric problem, where the input and output partitions of $U$ can now be different. That is, $U:AB\to KL$, where systems $AB$ are of dimension $d_Ad_B$, and systems $KL$ of dimension $d_Kd_L$. Since unitarity implies that $d_Ad_B=d_Kd_L$, the usage of the Mar{\v c}enko--Pastur law remains unchanged, with the exception that we are to now consider cases where $d_Ad_K\leq d_Bd_L$ and $d_Ad_K> d_Bd_L$. The first case would again give us the familiar form $\overline{p^{\mathrm{opt}}_\mathrm{hack}}\approx\mathcal{I}_\kappa^2+(1-\mathcal{I}_\kappa^2)/(d_Ad_K)$, which means that the asymptotic quantum-hacking fidelity of $\mathcal{I}_\kappa^2$ can still be achieved. The second case, however, leads to the expression $\overline{p^{\mathrm{opt}}_\mathrm{hack}}\approx[d_Bd_L/(d_Ad_K)]\mathcal{I}_\kappa^2+(1-\mathcal{I}_\kappa^2)/(d_Ad_K)$, which reasonably sets the asymptotic fidelity to $[d_Bd_L/(d_Ad_K)]\mathcal{I}_\kappa^2<\mathcal{I}_\kappa^2$. Obviously, both expressions revert to those in the symmetric problem when $d_A=d_K$ and $d_B=d_L$.

Finally, we look at what happens when a third party, Charlie, is also using the quantum computer alongside Alice and Bob. If Bob is still only concerned with Alice's data, then the hacking fidelity is
\begin{equation} 
	\label{eqn:phack_d0}
	p_{\mathrm{hack}}=|\bra{\Psi'}R_{BB'}U_{ABC}\ket{\Psi}|^2\,,
\end{equation}
where $\ket{\Psi}=\ket{\psi}_{AA'}\ket{\phi}_{BB'}\ket{\psi}_{CC'}$ and $\ket{\Psi'}=\ket{\psi}_{AB}\ket{\psi}_{A'B'}\ket{\psi}_{CC'}$. Descriptively, Charlie's \emph{output} state of $U_{ABC}$, after injecting a maximally entangled input state $\ket{\psi}\bra{\psi}_{CC'}$ with his other ancillary system $C'$, should be unaffected by Bob's action. If the dimensions of $C$ and $C'$ are equal to $d_0$, one immediately finds that
\begin{equation}
	\bra{\psi}_{CC'}U_{ABC}\ket{\psi}_{CC'}=\dfrac{1}{d_0}\Tr_C U
\end{equation}
for any $U_{ABC}=U$. Suppose that $U$ is some $Dd_0$-dimensional unitary operator. In order to calculate the Haar average $\overline{p^\mathrm{opt}_\mathrm{hack}}$, we would then require the understanding of the matrix elements of $\Tr_C U/d_0$:
\begin{equation}
	\dfrac{1}{d_0}\Tr_C U=\dfrac{1}{d_0}\sum^{D-1}_{j,k=0}\ket{j}\sum^{d_0-1}_{l=0}U^{jl}_{kl}\bra{k}\,,
\end{equation}
where $U^{jk}_{lm}$ are the matrix elements of $U$ in the product computational basis, and the dimension of $\ket{j}$ is $D$. It is easy to see that $\overline{\sum_lU^{jl}_{kl}}=0$ and
\begin{align}
	\overline{\sum_lU^{jl}_{kl}\sum_{l'}U^{*j'l'}_{\hphantom{*}k'l'}}=&\,\delta_{j,j'}\delta_{k,k'}\sum_l\overline{|U^{jl}_{kl}|^2}\nonumber\\
	=&\,\dfrac{\delta_{j,j'}\delta_{k,k'}}{Dd_0}\sum_l1=\dfrac{\delta_{j,j'}\delta_{k,k'}}{D}\,.
\end{align}
This shows that the important moment averages of $\Tr_C U$ are independent of $d_0$, such that the only modification of the respective optimal hacking fidelities in both the symmetric and asymmetric problems is solely an extra multiplicative factor of $1/d_0^2$.


\begin{thebibliography}{47}%
\makeatletter
\providecommand \@ifxundefined [1]{%
 \@ifx{#1\undefined}
}%
\providecommand \@ifnum [1]{%
 \ifnum #1\expandafter \@firstoftwo
 \else \expandafter \@secondoftwo
 \fi
}%
\providecommand \@ifx [1]{%
 \ifx #1\expandafter \@firstoftwo
 \else \expandafter \@secondoftwo
 \fi
}%
\providecommand \natexlab [1]{#1}%
\providecommand \enquote  [1]{``#1''}%
\providecommand \bibnamefont  [1]{#1}%
\providecommand \bibfnamefont [1]{#1}%
\providecommand \citenamefont [1]{#1}%
\providecommand \href@noop [0]{\@secondoftwo}%
\providecommand \href [0]{\begingroup \@sanitize@url \@href}%
\providecommand \@href[1]{\@@startlink{#1}\@@href}%
\providecommand \@@href[1]{\endgroup#1\@@endlink}%
\providecommand \@sanitize@url [0]{\catcode `\\12\catcode `\$12\catcode
  `\&12\catcode `\#12\catcode `\^12\catcode `\_12\catcode `\%12\relax}%
\providecommand \@@startlink[1]{}%
\providecommand \@@endlink[0]{}%
\providecommand \url  [0]{\begingroup\@sanitize@url \@url }%
\providecommand \@url [1]{\endgroup\@href {#1}{\urlprefix }}%
\providecommand \urlprefix  [0]{URL }%
\providecommand \Eprint [0]{\href }%
\providecommand \doibase [0]{https://doi.org/}%
\providecommand \selectlanguage [0]{\@gobble}%
\providecommand \bibinfo  [0]{\@secondoftwo}%
\providecommand \bibfield  [0]{\@secondoftwo}%
\providecommand \translation [1]{[#1]}%
\providecommand \BibitemOpen [0]{}%
\providecommand \bibitemStop [0]{}%
\providecommand \bibitemNoStop [0]{.\EOS\space}%
\providecommand \EOS [0]{\spacefactor3000\relax}%
\providecommand \BibitemShut  [1]{\csname bibitem#1\endcsname}%
\let\auto@bib@innerbib\@empty
\bibitem [{\citenamefont {Yurke}\ and\ \citenamefont
  {Denker}(1984)}]{yurke1984quantum}%
  \BibitemOpen
  \bibfield  {author} {\bibinfo {author} {\bibfnamefont {B.}~\bibnamefont
  {Yurke}}\ and\ \bibinfo {author} {\bibfnamefont {J.~S.}\ \bibnamefont
  {Denker}},\ }\bibfield  {title} {\bibinfo {title} {Quantum network theory},\
  }\href@noop {} {\bibfield  {journal} {\bibinfo  {journal} {Physical Review
  A}\ }\textbf {\bibinfo {volume} {29}},\ \bibinfo {pages} {1419} (\bibinfo
  {year} {1984})}\BibitemShut {NoStop}%
\bibitem [{\citenamefont {Elliott}(2002)}]{elliott2002building}%
  \BibitemOpen
  \bibfield  {author} {\bibinfo {author} {\bibfnamefont {C.}~\bibnamefont
  {Elliott}},\ }\bibfield  {title} {\bibinfo {title} {Building the quantum
  network},\ }\href@noop {} {\bibfield  {journal} {\bibinfo  {journal} {New
  Journal of Physics}\ }\textbf {\bibinfo {volume} {4}},\ \bibinfo {pages} {46}
  (\bibinfo {year} {2002})}\BibitemShut {NoStop}%
\bibitem [{\citenamefont {Simon}(2017)}]{simon2017towards}%
  \BibitemOpen
  \bibfield  {author} {\bibinfo {author} {\bibfnamefont {C.}~\bibnamefont
  {Simon}},\ }\bibfield  {title} {\bibinfo {title} {Towards a global quantum
  network},\ }\href@noop {} {\bibfield  {journal} {\bibinfo  {journal} {Nature
  Photonics}\ }\textbf {\bibinfo {volume} {11}},\ \bibinfo {pages} {678}
  (\bibinfo {year} {2017})}\BibitemShut {NoStop}%
\bibitem [{\citenamefont {Kimble}(2008)}]{kimble2008quantum}%
  \BibitemOpen
  \bibfield  {author} {\bibinfo {author} {\bibfnamefont {H.~J.}\ \bibnamefont
  {Kimble}},\ }\bibfield  {title} {\bibinfo {title} {The quantum internet},\
  }\href@noop {} {\bibfield  {journal} {\bibinfo  {journal} {Nature}\ }\textbf
  {\bibinfo {volume} {453}},\ \bibinfo {pages} {1023} (\bibinfo {year}
  {2008})}\BibitemShut {NoStop}%
\bibitem [{\citenamefont {Wehner}\ \emph {et~al.}(2018)\citenamefont {Wehner},
  \citenamefont {Elkouss},\ and\ \citenamefont {Hanson}}]{wehner2018quantum}%
  \BibitemOpen
  \bibfield  {author} {\bibinfo {author} {\bibfnamefont {S.}~\bibnamefont
  {Wehner}}, \bibinfo {author} {\bibfnamefont {D.}~\bibnamefont {Elkouss}},\
  and\ \bibinfo {author} {\bibfnamefont {R.}~\bibnamefont {Hanson}},\
  }\bibfield  {title} {\bibinfo {title} {Quantum internet: A vision for the
  road ahead},\ }\href@noop {} {\bibfield  {journal} {\bibinfo  {journal}
  {Science}\ }\textbf {\bibinfo {volume} {362}} (\bibinfo {year}
  {2018})}\BibitemShut {NoStop}%
\bibitem [{\citenamefont {Pirandola}\ and\ \citenamefont
  {Braunstein}(2016)}]{pirandola2016physics}%
  \BibitemOpen
  \bibfield  {author} {\bibinfo {author} {\bibfnamefont {S.}~\bibnamefont
  {Pirandola}}\ and\ \bibinfo {author} {\bibfnamefont {S.~L.}\ \bibnamefont
  {Braunstein}},\ }\bibfield  {title} {\bibinfo {title} {Physics: Unite to
  build a quantum internet},\ }\href@noop {} {\bibfield  {journal} {\bibinfo
  {journal} {Nature News}\ }\textbf {\bibinfo {volume} {532}},\ \bibinfo
  {pages} {169} (\bibinfo {year} {2016})}\BibitemShut {NoStop}%
\bibitem [{\citenamefont {Caleffi}\ \emph {et~al.}(2018)\citenamefont
  {Caleffi}, \citenamefont {Cacciapuoti},\ and\ \citenamefont
  {Bianchi}}]{caleffi2018quantum}%
  \BibitemOpen
  \bibfield  {author} {\bibinfo {author} {\bibfnamefont {M.}~\bibnamefont
  {Caleffi}}, \bibinfo {author} {\bibfnamefont {A.~S.}\ \bibnamefont
  {Cacciapuoti}},\ and\ \bibinfo {author} {\bibfnamefont {G.}~\bibnamefont
  {Bianchi}},\ }\bibfield  {title} {\bibinfo {title} {Quantum internet: From
  communication to distributed computing!},\ }in\ \href@noop {} {\emph
  {\bibinfo {booktitle} {Proceedings of the 5th ACM International Conference on
  Nanoscale Computing and Communication}}}\ (\bibinfo {year} {2018})\ pp.\
  \bibinfo {pages} {1--4}\BibitemShut {NoStop}%
\bibitem [{\citenamefont {Castelvecchi}(2018)}]{castelvecchi2018quantum}%
  \BibitemOpen
  \bibfield  {author} {\bibinfo {author} {\bibfnamefont {D.}~\bibnamefont
  {Castelvecchi}},\ }\bibfield  {title} {\bibinfo {title} {The quantum internet
  has arrived (and it hasn't)},\ }\href@noop {} {\bibfield  {journal} {\bibinfo
   {journal} {Nature}\ }\textbf {\bibinfo {volume} {554}} (\bibinfo {year}
  {2018})}\BibitemShut {NoStop}%
\bibitem [{\citenamefont {Sekino}\ and\ \citenamefont
  {Susskind}(2008)}]{sekino2008fast}%
  \BibitemOpen
  \bibfield  {author} {\bibinfo {author} {\bibfnamefont {Y.}~\bibnamefont
  {Sekino}}\ and\ \bibinfo {author} {\bibfnamefont {L.}~\bibnamefont
  {Susskind}},\ }\bibfield  {title} {\bibinfo {title} {Fast scramblers},\
  }\href@noop {} {\bibfield  {journal} {\bibinfo  {journal} {Journal of High
  Energy Physics}\ }\textbf {\bibinfo {volume} {2008}},\ \bibinfo {pages} {065}
  (\bibinfo {year} {2008})}\BibitemShut {NoStop}%
\bibitem [{\citenamefont {Hosur}\ \emph {et~al.}(2016)\citenamefont {Hosur},
  \citenamefont {Qi}, \citenamefont {Roberts},\ and\ \citenamefont
  {Yoshida}}]{hosur2016chaos}%
  \BibitemOpen
  \bibfield  {author} {\bibinfo {author} {\bibfnamefont {P.}~\bibnamefont
  {Hosur}}, \bibinfo {author} {\bibfnamefont {X.-L.}\ \bibnamefont {Qi}},
  \bibinfo {author} {\bibfnamefont {D.~A.}\ \bibnamefont {Roberts}},\ and\
  \bibinfo {author} {\bibfnamefont {B.}~\bibnamefont {Yoshida}},\ }\bibfield
  {title} {\bibinfo {title} {Chaos in quantum channels},\ }\href
  {https://doi.org/10.1007/JHEP02(2016)004} {\bibfield  {journal} {\bibinfo
  {journal} {Journal of High Energy Physics}\ }\textbf {\bibinfo {volume}
  {2016}},\ \bibinfo {pages} {4} (\bibinfo {year} {2016})}\BibitemShut
  {NoStop}%
\bibitem [{\citenamefont {Blok}\ \emph {et~al.}(2021)\citenamefont {Blok},
  \citenamefont {Ramasesh}, \citenamefont {Schuster}, \citenamefont {O'Brien},
  \citenamefont {Kreikebaum}, \citenamefont {Dahlen}, \citenamefont {Morvan},
  \citenamefont {Yoshida}, \citenamefont {Yao},\ and\ \citenamefont
  {Siddiqi}}]{blok2021scrambling}%
  \BibitemOpen
  \bibfield  {author} {\bibinfo {author} {\bibfnamefont {M.~S.}\ \bibnamefont
  {Blok}}, \bibinfo {author} {\bibfnamefont {V.~V.}\ \bibnamefont {Ramasesh}},
  \bibinfo {author} {\bibfnamefont {T.}~\bibnamefont {Schuster}}, \bibinfo
  {author} {\bibfnamefont {K.}~\bibnamefont {O'Brien}}, \bibinfo {author}
  {\bibfnamefont {J.~M.}\ \bibnamefont {Kreikebaum}}, \bibinfo {author}
  {\bibfnamefont {D.}~\bibnamefont {Dahlen}}, \bibinfo {author} {\bibfnamefont
  {A.}~\bibnamefont {Morvan}}, \bibinfo {author} {\bibfnamefont
  {B.}~\bibnamefont {Yoshida}}, \bibinfo {author} {\bibfnamefont {N.~Y.}\
  \bibnamefont {Yao}},\ and\ \bibinfo {author} {\bibfnamefont {I.}~\bibnamefont
  {Siddiqi}},\ }\bibfield  {title} {\bibinfo {title} {Quantum information
  scrambling on a superconducting qutrit processor},\ }\href
  {https://doi.org/10.1103/PhysRevX.11.021010} {\bibfield  {journal} {\bibinfo
  {journal} {Phys. Rev. X}\ }\textbf {\bibinfo {volume} {11}},\ \bibinfo
  {pages} {021010} (\bibinfo {year} {2021})}\BibitemShut {NoStop}%
\bibitem [{\citenamefont {Page}(1994)}]{page1994black}%
  \BibitemOpen
  \bibfield  {author} {\bibinfo {author} {\bibfnamefont {D.~N.}\ \bibnamefont
  {Page}},\ }\bibfield  {title} {\bibinfo {title} {Black hole information},\
  }in\ \href@noop {} {\emph {\bibinfo {booktitle} {Proceedings of the 5th
  Canadian conference on general relativity and relativistic astrophysics}}},\
  Vol.~\bibinfo {volume} {1}\ (\bibinfo {organization} {World Scientific},\
  \bibinfo {year} {1994})\ pp.\ \bibinfo {pages} {1--41}\BibitemShut {NoStop}%
\bibitem [{\citenamefont {Hayden}\ and\ \citenamefont
  {Preskill}(2007)}]{hayden2007black}%
  \BibitemOpen
  \bibfield  {author} {\bibinfo {author} {\bibfnamefont {P.}~\bibnamefont
  {Hayden}}\ and\ \bibinfo {author} {\bibfnamefont {J.}~\bibnamefont
  {Preskill}},\ }\bibfield  {title} {\bibinfo {title} {Black holes as mirrors:
  quantum information in random subsystems},\ }\href@noop {} {\bibfield
  {journal} {\bibinfo  {journal} {Journal of high energy physics}\ }\textbf
  {\bibinfo {volume} {2007}},\ \bibinfo {pages} {120} (\bibinfo {year}
  {2007})}\BibitemShut {NoStop}%
\bibitem [{\citenamefont {Bao}\ and\ \citenamefont
  {Kikuchi}(2021)}]{bao2021hayden}%
  \BibitemOpen
  \bibfield  {author} {\bibinfo {author} {\bibfnamefont {N.}~\bibnamefont
  {Bao}}\ and\ \bibinfo {author} {\bibfnamefont {Y.}~\bibnamefont {Kikuchi}},\
  }\bibfield  {title} {\bibinfo {title} {Hayden-preskill decoding from noisy
  hawking radiation},\ }\href@noop {} {\bibfield  {journal} {\bibinfo
  {journal} {Journal of High Energy Physics}\ }\textbf {\bibinfo {volume}
  {2021}},\ \bibinfo {pages} {1} (\bibinfo {year} {2021})}\BibitemShut
  {NoStop}%
\bibitem [{\citenamefont {Cheng}\ \emph {et~al.}(2020)\citenamefont {Cheng},
  \citenamefont {Liu}, \citenamefont {Guo}, \citenamefont {Chen}, \citenamefont
  {Zhang},\ and\ \citenamefont {Zhai}}]{cheng2020realizing}%
  \BibitemOpen
  \bibfield  {author} {\bibinfo {author} {\bibfnamefont {Y.}~\bibnamefont
  {Cheng}}, \bibinfo {author} {\bibfnamefont {C.}~\bibnamefont {Liu}}, \bibinfo
  {author} {\bibfnamefont {J.}~\bibnamefont {Guo}}, \bibinfo {author}
  {\bibfnamefont {Y.}~\bibnamefont {Chen}}, \bibinfo {author} {\bibfnamefont
  {P.}~\bibnamefont {Zhang}},\ and\ \bibinfo {author} {\bibfnamefont
  {H.}~\bibnamefont {Zhai}},\ }\bibfield  {title} {\bibinfo {title} {Realizing
  the hayden-preskill protocol with coupled dicke models},\ }\href@noop {}
  {\bibfield  {journal} {\bibinfo  {journal} {Physical Review Research}\
  }\textbf {\bibinfo {volume} {2}},\ \bibinfo {pages} {043024} (\bibinfo {year}
  {2020})}\BibitemShut {NoStop}%
\bibitem [{\citenamefont {Yoshida}(2019)}]{yoshida2019firewalls}%
  \BibitemOpen
  \bibfield  {author} {\bibinfo {author} {\bibfnamefont {B.}~\bibnamefont
  {Yoshida}},\ }\bibfield  {title} {\bibinfo {title} {Firewalls vs.
  scrambling},\ }\href@noop {} {\bibfield  {journal} {\bibinfo  {journal}
  {Journal of High Energy Physics}\ }\textbf {\bibinfo {volume} {2019}},\
  \bibinfo {pages} {1} (\bibinfo {year} {2019})}\BibitemShut {NoStop}%
\bibitem [{\citenamefont {Yoshida}\ and\ \citenamefont
  {Kitaev}(2017)}]{yoshida2017efficient}%
  \BibitemOpen
  \bibfield  {author} {\bibinfo {author} {\bibfnamefont {B.}~\bibnamefont
  {Yoshida}}\ and\ \bibinfo {author} {\bibfnamefont {A.}~\bibnamefont
  {Kitaev}},\ }\bibfield  {title} {\bibinfo {title} {Efficient decoding for the
  hayden-preskill protocol},\ }\href@noop {} {\bibfield  {journal} {\bibinfo
  {journal} {arXiv preprint arXiv:1710.03363}\ } (\bibinfo {year}
  {2017})}\BibitemShut {NoStop}%
\bibitem [{\citenamefont {Dupuis}(2010)}]{dupuis2010decoupling}%
  \BibitemOpen
  \bibfield  {author} {\bibinfo {author} {\bibfnamefont {F.}~\bibnamefont
  {Dupuis}},\ }\bibfield  {title} {\bibinfo {title} {The decoupling approach to
  quantum information theory},\ }\href@noop {} {\bibfield  {journal} {\bibinfo
  {journal} {arXiv preprint arXiv:1004.1641}\ } (\bibinfo {year}
  {2010})}\BibitemShut {NoStop}%
\bibitem [{\citenamefont {Boes}\ \emph {et~al.}(2018)\citenamefont {Boes},
  \citenamefont {Wilming}, \citenamefont {Gallego},\ and\ \citenamefont
  {Eisert}}]{boes2018catalytic}%
  \BibitemOpen
  \bibfield  {author} {\bibinfo {author} {\bibfnamefont {P.}~\bibnamefont
  {Boes}}, \bibinfo {author} {\bibfnamefont {H.}~\bibnamefont {Wilming}},
  \bibinfo {author} {\bibfnamefont {R.}~\bibnamefont {Gallego}},\ and\ \bibinfo
  {author} {\bibfnamefont {J.}~\bibnamefont {Eisert}},\ }\bibfield  {title}
  {\bibinfo {title} {Catalytic quantum randomness},\ }\href@noop {} {\bibfield
  {journal} {\bibinfo  {journal} {Physical Review X}\ }\textbf {\bibinfo
  {volume} {8}},\ \bibinfo {pages} {041016} (\bibinfo {year}
  {2018})}\BibitemShut {NoStop}%
\bibitem [{\citenamefont {Wilming}(2020)}]{wilming2020entropy}%
  \BibitemOpen
  \bibfield  {author} {\bibinfo {author} {\bibfnamefont {H.}~\bibnamefont
  {Wilming}},\ }\bibfield  {title} {\bibinfo {title} {Entropy and reversible
  catalysis},\ }\href@noop {} {\bibfield  {journal} {\bibinfo  {journal} {arXiv
  preprint arXiv:2012.05573}\ } (\bibinfo {year} {2020})}\BibitemShut {NoStop}%
\bibitem [{\citenamefont {Lie}\ \emph {et~al.}(2019)\citenamefont {Lie},
  \citenamefont {Kwon}, \citenamefont {Kim},\ and\ \citenamefont
  {Jeong}}]{lie2019unconditionally}%
  \BibitemOpen
  \bibfield  {author} {\bibinfo {author} {\bibfnamefont {S.~H.}\ \bibnamefont
  {Lie}}, \bibinfo {author} {\bibfnamefont {H.}~\bibnamefont {Kwon}}, \bibinfo
  {author} {\bibfnamefont {M.}~\bibnamefont {Kim}},\ and\ \bibinfo {author}
  {\bibfnamefont {H.}~\bibnamefont {Jeong}},\ }\bibfield  {title} {\bibinfo
  {title} {Unconditionally secure qubit commitment scheme using quantum
  maskers},\ }\href@noop {} {\bibfield  {journal} {\bibinfo  {journal} {arXiv
  preprint arXiv:1903.12304}\ } (\bibinfo {year} {2019})}\BibitemShut {NoStop}%
\bibitem [{\citenamefont {Lie}\ and\ \citenamefont
  {Jeong}(2020{\natexlab{a}})}]{lie2020randomness}%
  \BibitemOpen
  \bibfield  {author} {\bibinfo {author} {\bibfnamefont {S.~H.}\ \bibnamefont
  {Lie}}\ and\ \bibinfo {author} {\bibfnamefont {H.}~\bibnamefont {Jeong}},\
  }\bibfield  {title} {\bibinfo {title} {Randomness cost of masking quantum
  information and the information conservation law},\ }\href@noop {} {\bibfield
   {journal} {\bibinfo  {journal} {Physical Review A}\ }\textbf {\bibinfo
  {volume} {101}},\ \bibinfo {pages} {052322} (\bibinfo {year}
  {2020}{\natexlab{a}})}\BibitemShut {NoStop}%
\bibitem [{\citenamefont {Lie}\ and\ \citenamefont
  {Jeong}(2020{\natexlab{b}})}]{lie2020uniform}%
  \BibitemOpen
  \bibfield  {author} {\bibinfo {author} {\bibfnamefont {S.~H.}\ \bibnamefont
  {Lie}}\ and\ \bibinfo {author} {\bibfnamefont {H.}~\bibnamefont {Jeong}},\
  }\bibfield  {title} {\bibinfo {title} {Only uniform randomness can yield
  quantum advantages},\ }\href@noop {} {\bibfield  {journal} {\bibinfo
  {journal} {arXiv preprint arXiv:2010.14795}\ } (\bibinfo {year}
  {2020}{\natexlab{b}})}\BibitemShut {NoStop}%
\bibitem [{\citenamefont {Wootters}\ and\ \citenamefont
  {Zurek}(1982)}]{wootters1982single}%
  \BibitemOpen
  \bibfield  {author} {\bibinfo {author} {\bibfnamefont {W.~K.}\ \bibnamefont
  {Wootters}}\ and\ \bibinfo {author} {\bibfnamefont {W.~H.}\ \bibnamefont
  {Zurek}},\ }\bibfield  {title} {\bibinfo {title} {A single quantum cannot be
  cloned},\ }\href@noop {} {\bibfield  {journal} {\bibinfo  {journal} {Nature}\
  }\textbf {\bibinfo {volume} {299}},\ \bibinfo {pages} {802} (\bibinfo {year}
  {1982})}\BibitemShut {NoStop}%
\bibitem [{\citenamefont {Garcia-Escartin}\ and\ \citenamefont
  {Chamorro-Posada}(2013)}]{garcia2013swap}%
  \BibitemOpen
  \bibfield  {author} {\bibinfo {author} {\bibfnamefont {J.~C.}\ \bibnamefont
  {Garcia-Escartin}}\ and\ \bibinfo {author} {\bibfnamefont {P.}~\bibnamefont
  {Chamorro-Posada}},\ }\bibfield  {title} {\bibinfo {title} {A swap gate for
  qudits},\ }\href@noop {} {\bibfield  {journal} {\bibinfo  {journal} {Quantum
  information processing}\ }\textbf {\bibinfo {volume} {12}},\ \bibinfo {pages}
  {3625} (\bibinfo {year} {2013})}\BibitemShut {NoStop}%
\bibitem [{SM_()}]{SM_Qhack}%
  \BibitemOpen
  \href@noop {} {\bibinfo {title} {See {S}upplemental {M}aterial for
  arguments against \emph{arbitrary} quantum-state installation, derivations of
  hacking-fidelity bounds, numerical optimization for quantum hacking and
  derivations of asymptotic optimal-hacking fidelity formulas}}\BibitemShut
  {NoStop}%
\bibitem [{\citenamefont {Horodecki}\ \emph {et~al.}(2005)\citenamefont
  {Horodecki}, \citenamefont {Oppenheim},\ and\ \citenamefont
  {Winter}}]{horodecki2005partial}%
  \BibitemOpen
  \bibfield  {author} {\bibinfo {author} {\bibfnamefont {M.}~\bibnamefont
  {Horodecki}}, \bibinfo {author} {\bibfnamefont {J.}~\bibnamefont
  {Oppenheim}},\ and\ \bibinfo {author} {\bibfnamefont {A.}~\bibnamefont
  {Winter}},\ }\bibfield  {title} {\bibinfo {title} {Partial quantum
  information},\ }\href@noop {} {\bibfield  {journal} {\bibinfo  {journal}
  {Nature}\ }\textbf {\bibinfo {volume} {436}},\ \bibinfo {pages} {673}
  (\bibinfo {year} {2005})}\BibitemShut {NoStop}%
\bibitem [{\citenamefont {Horodecki}\ \emph {et~al.}(1999)\citenamefont
  {Horodecki}, \citenamefont {Horodecki},\ and\ \citenamefont
  {Horodecki}}]{horodecki1999general}%
  \BibitemOpen
  \bibfield  {author} {\bibinfo {author} {\bibfnamefont {M.}~\bibnamefont
  {Horodecki}}, \bibinfo {author} {\bibfnamefont {P.}~\bibnamefont
  {Horodecki}},\ and\ \bibinfo {author} {\bibfnamefont {R.}~\bibnamefont
  {Horodecki}},\ }\bibfield  {title} {\bibinfo {title} {General teleportation
  channel, singlet fraction, and quasidistillation},\ }\href@noop {} {\bibfield
   {journal} {\bibinfo  {journal} {Physical Review A}\ }\textbf {\bibinfo
  {volume} {60}},\ \bibinfo {pages} {1888} (\bibinfo {year}
  {1999})}\BibitemShut {NoStop}%
\bibitem [{\citenamefont {Montangero}\ \emph {et~al.}(2018)\citenamefont
  {Montangero}, \citenamefont {Montangero},\ and\ \citenamefont
  {Evenson}}]{montangero2018introduction}%
  \BibitemOpen
  \bibfield  {author} {\bibinfo {author} {\bibfnamefont {S.}~\bibnamefont
  {Montangero}}, \bibinfo {author} {\bibnamefont {Montangero}},\ and\ \bibinfo
  {author} {\bibnamefont {Evenson}},\ }\href@noop {} {\emph {\bibinfo {title}
  {Introduction to Tensor Network Methods}}}\ (\bibinfo  {publisher}
  {Springer},\ \bibinfo {year} {2018})\BibitemShut {NoStop}%
\bibitem [{\citenamefont {Landsberg}\ \emph {et~al.}(2011)\citenamefont
  {Landsberg}, \citenamefont {Qi},\ and\ \citenamefont
  {Ye}}]{landsberg2011geometry}%
  \BibitemOpen
  \bibfield  {author} {\bibinfo {author} {\bibfnamefont {J.~M.}\ \bibnamefont
  {Landsberg}}, \bibinfo {author} {\bibfnamefont {Y.}~\bibnamefont {Qi}},\ and\
  \bibinfo {author} {\bibfnamefont {K.}~\bibnamefont {Ye}},\ }\bibfield
  {title} {\bibinfo {title} {On the geometry of tensor network states},\
  }\href@noop {} {\bibfield  {journal} {\bibinfo  {journal} {arXiv preprint
  arXiv:1105.4449}\ } (\bibinfo {year} {2011})}\BibitemShut {NoStop}%
\bibitem [{\citenamefont {{\.Z}yczkowski}\ and\ \citenamefont
  {Bengtsson}(2004)}]{zyczkowski2004duality}%
  \BibitemOpen
  \bibfield  {author} {\bibinfo {author} {\bibfnamefont {K.}~\bibnamefont
  {{\.Z}yczkowski}}\ and\ \bibinfo {author} {\bibfnamefont {I.}~\bibnamefont
  {Bengtsson}},\ }\bibfield  {title} {\bibinfo {title} {On duality between
  quantum maps and quantum states},\ }\href@noop {} {\bibfield  {journal}
  {\bibinfo  {journal} {Open systems \& information dynamics}\ }\textbf
  {\bibinfo {volume} {11}},\ \bibinfo {pages} {3} (\bibinfo {year}
  {2004})}\BibitemShut {NoStop}%
\bibitem [{\citenamefont {Miszczak}(2011)}]{miszczak2011singular}%
  \BibitemOpen
  \bibfield  {author} {\bibinfo {author} {\bibfnamefont {J.~A.}\ \bibnamefont
  {Miszczak}},\ }\bibfield  {title} {\bibinfo {title} {Singular value
  decomposition and matrix reorderings in quantum information theory},\
  }\href@noop {} {\bibfield  {journal} {\bibinfo  {journal} {International
  Journal of Modern Physics C}\ }\textbf {\bibinfo {volume} {22}},\ \bibinfo
  {pages} {897} (\bibinfo {year} {2011})}\BibitemShut {NoStop}%
\bibitem [{\citenamefont {Bruzda}\ \emph {et~al.}(2009)\citenamefont {Bruzda},
  \citenamefont {Cappellini}, \citenamefont {Sommers},\ and\ \citenamefont
  {{\.Z}yczkowski}}]{bruzda2009random}%
  \BibitemOpen
  \bibfield  {author} {\bibinfo {author} {\bibfnamefont {W.}~\bibnamefont
  {Bruzda}}, \bibinfo {author} {\bibfnamefont {V.}~\bibnamefont {Cappellini}},
  \bibinfo {author} {\bibfnamefont {H.-J.}\ \bibnamefont {Sommers}},\ and\
  \bibinfo {author} {\bibfnamefont {K.}~\bibnamefont {{\.Z}yczkowski}},\
  }\bibfield  {title} {\bibinfo {title} {Random quantum operations},\
  }\href@noop {} {\bibfield  {journal} {\bibinfo  {journal} {Physics Letters
  A}\ }\textbf {\bibinfo {volume} {373}},\ \bibinfo {pages} {320} (\bibinfo
  {year} {2009})}\BibitemShut {NoStop}%
\bibitem [{1()}]{1}%
  \BibitemOpen
  \href@noop {} {}\bibinfo {note} {Although $R$ is $d_B^2\times d_B^2$, it only
  acts on a $d_A^2$-dimensional subspace $\Im \{(I_B \otimes \chi )U^o\}$ to
  the right and $(\mathrm {Ker } \{U^o\})^\perp $ to the left, so that we may
  treat $R$ either as a rank-$d_A^2$ partial unitary matrix or a $d_A^2 \times
  d_B^2$ coisometry without losing generality.}\BibitemShut {Stop}%
\bibitem [{\citenamefont {Lesovik}\ \emph {et~al.}(2019)\citenamefont
  {Lesovik}, \citenamefont {Sadovskyy}, \citenamefont {Suslov}, \citenamefont
  {Lebedev},\ and\ \citenamefont {Vinokur}}]{lesovik2019arrow}%
  \BibitemOpen
  \bibfield  {author} {\bibinfo {author} {\bibfnamefont {G.~B.}\ \bibnamefont
  {Lesovik}}, \bibinfo {author} {\bibfnamefont {I.~A.}\ \bibnamefont
  {Sadovskyy}}, \bibinfo {author} {\bibfnamefont {M.}~\bibnamefont {Suslov}},
  \bibinfo {author} {\bibfnamefont {A.~V.}\ \bibnamefont {Lebedev}},\ and\
  \bibinfo {author} {\bibfnamefont {V.~M.}\ \bibnamefont {Vinokur}},\
  }\bibfield  {title} {\bibinfo {title} {Arrow of time and its reversal on the
  ibm quantum computer},\ }\href@noop {} {\bibfield  {journal} {\bibinfo
  {journal} {Scientific reports}\ }\textbf {\bibinfo {volume} {9}},\ \bibinfo
  {pages} {1} (\bibinfo {year} {2019})}\BibitemShut {NoStop}%
\bibitem [{\citenamefont {Teo}\ \emph {et~al.}(2011{\natexlab{a}})\citenamefont
  {Teo}, \citenamefont {Zhu}, \citenamefont {Englert}, \citenamefont {\ifmmode
  \check{R}\else \v{R}\fi{}eh\'a\ifmmode~\check{c}\else \v{c}\fi{}ek},\ and\
  \citenamefont {Hradil}}]{teo2011mlme}%
  \BibitemOpen
  \bibfield  {author} {\bibinfo {author} {\bibfnamefont {Y.~S.}\ \bibnamefont
  {Teo}}, \bibinfo {author} {\bibfnamefont {H.}~\bibnamefont {Zhu}}, \bibinfo
  {author} {\bibfnamefont {B.-G.}\ \bibnamefont {Englert}}, \bibinfo {author}
  {\bibfnamefont {J.}~\bibnamefont {\ifmmode \check{R}\else
  \v{R}\fi{}eh\'a\ifmmode~\check{c}\else \v{c}\fi{}ek}},\ and\ \bibinfo
  {author} {\bibfnamefont {Z.~c.~v.}\ \bibnamefont {Hradil}},\ }\bibfield
  {title} {\bibinfo {title} {Quantum-state reconstruction by maximizing
  likelihood and entropy},\ }\href
  {https://doi.org/10.1103/PhysRevLett.107.020404} {\bibfield  {journal}
  {\bibinfo  {journal} {Phys. Rev. Lett.}\ }\textbf {\bibinfo {volume} {107}},\
  \bibinfo {pages} {020404} (\bibinfo {year} {2011}{\natexlab{a}})}\BibitemShut
  {NoStop}%
\bibitem [{\citenamefont {Teo}\ \emph {et~al.}(2011{\natexlab{b}})\citenamefont
  {Teo}, \citenamefont {Englert}, \citenamefont {\ifmmode \check{R}\else
  \v{R}\fi{}eh\'a\ifmmode~\check{c}\else \v{c}\fi{}ek},\ and\ \citenamefont
  {Hradil}}]{teo2011adaptive}%
  \BibitemOpen
  \bibfield  {author} {\bibinfo {author} {\bibfnamefont {Y.~S.}\ \bibnamefont
  {Teo}}, \bibinfo {author} {\bibfnamefont {B.-G.}\ \bibnamefont {Englert}},
  \bibinfo {author} {\bibfnamefont {J.}~\bibnamefont {\ifmmode \check{R}\else
  \v{R}\fi{}eh\'a\ifmmode~\check{c}\else \v{c}\fi{}ek}},\ and\ \bibinfo
  {author} {\bibfnamefont {Z.~c.~v.}\ \bibnamefont {Hradil}},\ }\bibfield
  {title} {\bibinfo {title} {Adaptive schemes for incomplete quantum process
  tomography},\ }\href {https://doi.org/10.1103/PhysRevA.84.062125} {\bibfield
  {journal} {\bibinfo  {journal} {Phys. Rev. A}\ }\textbf {\bibinfo {volume}
  {84}},\ \bibinfo {pages} {062125} (\bibinfo {year}
  {2011}{\natexlab{b}})}\BibitemShut {NoStop}%
\bibitem [{\citenamefont {Page}(1993)}]{page1993entropy}%
  \BibitemOpen
  \bibfield  {author} {\bibinfo {author} {\bibfnamefont {D.~N.}\ \bibnamefont
  {Page}},\ }\bibfield  {title} {\bibinfo {title} {Average entropy of a
  subsystem},\ }\href {https://doi.org/10.1103/PhysRevLett.71.1291} {\bibfield
  {journal} {\bibinfo  {journal} {Phys. Rev. Lett.}\ }\textbf {\bibinfo
  {volume} {71}},\ \bibinfo {pages} {1291} (\bibinfo {year}
  {1993})}\BibitemShut {NoStop}%
\bibitem [{\citenamefont {Mehta}(2004)}]{mehta2004random}%
  \BibitemOpen
  \bibfield  {author} {\bibinfo {author} {\bibfnamefont {M.~L.}\ \bibnamefont
  {Mehta}},\ }\href@noop {} {\emph {\bibinfo {title} {Random Matrices}}}\
  (\bibinfo  {publisher} {Elsevier},\ \bibinfo {address} {Amsterdam},\ \bibinfo
  {year} {2004})\BibitemShut {NoStop}%
\bibitem [{\citenamefont {Mar{\v c}enko}\ and\ \citenamefont
  {Pastur}(1967)}]{marcenko1967eigenvalues}%
  \BibitemOpen
  \bibfield  {author} {\bibinfo {author} {\bibfnamefont {V.~A.}\ \bibnamefont
  {Mar{\v c}enko}}\ and\ \bibinfo {author} {\bibfnamefont {L.~A.}\ \bibnamefont
  {Pastur}},\ }\bibfield  {title} {\bibinfo {title} {Quantum mutual information
  and the one-time pad},\ }\href@noop {} {\bibfield  {journal} {\bibinfo
  {journal} {Math. USSR Sb.}\ }\textbf {\bibinfo {volume} {1}},\ \bibinfo
  {pages} {457} (\bibinfo {year} {1967})}\BibitemShut {NoStop}%
\bibitem [{SpF(2014)}]{SpFuncBk}%
  \BibitemOpen
  \bibfield  {title} {\bibinfo {title} {{S}pecial {F}unctions},\ }in\ \href
  {https://doi.org/https://doi.org/10.1016/B978-0-12-384933-5.00009-6} {\emph
  {\bibinfo {booktitle} {Table of Integrals, Series, and Products (Eighth
  Edition)}}},\ \bibinfo {editor} {edited by\ \bibinfo {editor} {\bibfnamefont
  {D.}~\bibnamefont {Zwillinger}}, \bibinfo {editor} {\bibfnamefont
  {V.}~\bibnamefont {Moll}}, \bibinfo {editor} {\bibfnamefont {I.}~\bibnamefont
  {Gradshteyn}},\ and\ \bibinfo {editor} {\bibfnamefont {I.}~\bibnamefont
  {Ryzhik}}}\ (\bibinfo  {publisher} {Academic Press},\ \bibinfo {address}
  {Boston},\ \bibinfo {year} {2014})\ \bibinfo {edition} {eighth edition}\
  ed.,\ pp.\ \bibinfo {pages} {1014--1059}\BibitemShut {NoStop}%
\bibitem [{\citenamefont {Tajima}\ and\ \citenamefont
  {Saito}(2021)}]{tajima2021symmetry}%
  \BibitemOpen
  \bibfield  {author} {\bibinfo {author} {\bibfnamefont {H.}~\bibnamefont
  {Tajima}}\ and\ \bibinfo {author} {\bibfnamefont {K.}~\bibnamefont {Saito}},\
  }\bibfield  {title} {\bibinfo {title} {Symmetry hinders quantum information
  recovery},\ }\href@noop {} {\bibfield  {journal} {\bibinfo  {journal} {arXiv
  preprint arXiv:2103.01876}\ } (\bibinfo {year} {2021})}\BibitemShut {NoStop}%
\bibitem [{\citenamefont {Nielsen}\ and\ \citenamefont
  {Chuang}(1997)}]{nielsen1997program}%
  \BibitemOpen
  \bibfield  {author} {\bibinfo {author} {\bibfnamefont {M.~A.}\ \bibnamefont
  {Nielsen}}\ and\ \bibinfo {author} {\bibfnamefont {I.~L.}\ \bibnamefont
  {Chuang}},\ }\bibfield  {title} {\bibinfo {title} {Programmable quantum gate
  arrays},\ }\href {https://doi.org/10.1103/PhysRevLett.79.321} {\bibfield
  {journal} {\bibinfo  {journal} {Phys. Rev. Lett.}\ }\textbf {\bibinfo
  {volume} {79}},\ \bibinfo {pages} {321} (\bibinfo {year} {1997})}\BibitemShut
  {NoStop}%
\bibitem [{\citenamefont {Oshita}\ \emph {et~al.}(2020)\citenamefont {Oshita},
  \citenamefont {Wang},\ and\ \citenamefont
  {Afshordi}}]{oshita2020reflectivity}%
  \BibitemOpen
  \bibfield  {author} {\bibinfo {author} {\bibfnamefont {N.}~\bibnamefont
  {Oshita}}, \bibinfo {author} {\bibfnamefont {Q.}~\bibnamefont {Wang}},\ and\
  \bibinfo {author} {\bibfnamefont {N.}~\bibnamefont {Afshordi}},\ }\bibfield
  {title} {\bibinfo {title} {On reflectivity of quantum black hole horizons},\
  }\href@noop {} {\bibfield  {journal} {\bibinfo  {journal} {Journal of
  Cosmology and Astroparticle Physics}\ }\textbf {\bibinfo {volume}
  {2020}}\bibinfo  {number} { (04)},\ \bibinfo {pages} {016}}\BibitemShut
  {NoStop}%
\bibitem [{\citenamefont {Wang}\ \emph {et~al.}(2020)\citenamefont {Wang},
  \citenamefont {Oshita},\ and\ \citenamefont {Afshordi}}]{wang2020echoes}%
  \BibitemOpen
\bibfield  {number} {  }\bibfield  {author} {\bibinfo {author} {\bibfnamefont
  {Q.}~\bibnamefont {Wang}}, \bibinfo {author} {\bibfnamefont {N.}~\bibnamefont
  {Oshita}},\ and\ \bibinfo {author} {\bibfnamefont {N.}~\bibnamefont
  {Afshordi}},\ }\bibfield  {title} {\bibinfo {title} {Echoes from quantum
  black holes},\ }\href@noop {} {\bibfield  {journal} {\bibinfo  {journal}
  {Physical Review D}\ }\textbf {\bibinfo {volume} {101}},\ \bibinfo {pages}
  {024031} (\bibinfo {year} {2020})}\BibitemShut {NoStop}%
\bibitem [{\citenamefont {Uhlmann}(1976)}]{uhlmann1976transition}%
  \BibitemOpen
  \bibfield  {author} {\bibinfo {author} {\bibfnamefont {A.}~\bibnamefont
  {Uhlmann}},\ }\bibfield  {title} {\bibinfo {title} {The “transition
  probability” in the state space of a∗-algebra},\ }\href@noop {}
  {\bibfield  {journal} {\bibinfo  {journal} {Reports on Mathematical Physics}\
  }\textbf {\bibinfo {volume} {9}},\ \bibinfo {pages} {273} (\bibinfo {year}
  {1976})}\BibitemShut {NoStop}%
\bibitem [{\citenamefont {Fuchs}\ and\ \citenamefont {Van
  De~Graaf}(1999)}]{fuchs1999cryptographic}%
  \BibitemOpen
  \bibfield  {author} {\bibinfo {author} {\bibfnamefont {C.~A.}\ \bibnamefont
  {Fuchs}}\ and\ \bibinfo {author} {\bibfnamefont {J.}~\bibnamefont {Van
  De~Graaf}},\ }\bibfield  {title} {\bibinfo {title} {Cryptographic
  distinguishability measures for quantum-mechanical states},\ }\href@noop {}
  {\bibfield  {journal} {\bibinfo  {journal} {IEEE Transactions on Information
  Theory}\ }\textbf {\bibinfo {volume} {45}},\ \bibinfo {pages} {1216}
  (\bibinfo {year} {1999})}\BibitemShut {NoStop}%
\end{thebibliography}
\end{document}